\DeclareMathOperator*{\argmin}{arg\,min}
\newcommand{\rom}[1]{\uppercase\expandafter{\romannumeral #1 \relax}}
\newtheorem{thm}{Theorem}
\newtheorem{cor}{Corollary}
\begin{document}
\title{Bayesian Federated Learning \\ over Wireless Networks}
 
\author{ Seunghoon Lee, Chanho Park, Song-Nam Hong,  
\\Yonina C. Eldar, and Namyoon Lee 
\thanks{ S. Lee, C. Park, and N. Lee are with the Department of Electrical Engineering, POSTECH, Pohang, Gyeongbuk, 37673 Korea (e-mail: \{shlee14 chanho26, nylee\}@postech.ac.kr).} 
\thanks{S.-N. Hong is with Department of Electrical Engineering, Hanyang University, Korea (e-mail: snhong@hanyang.ac.kr).}
\thanks{Y. C. Eldar is with the Math and CS Faculty, Weizmann Institute of Science,
Rehovot, Israel (e-mail: yonina.eldar@weizmann.ac.il).}}
\vspace{-2mm}

\maketitle
\setlength\arraycolsep{2pt}
\makeatletter
\newcommand{\vast}{\bBigg@{3.5}}
\newcommand{\Vast}{\bBigg@{4.5}}
\makeatother
\vspace{-12mm}

\begin{abstract}
Federated learning is a privacy-preserving and distributed training method using heterogeneous data sets stored at local devices. Federated learning over wireless networks requires aggregating locally computed gradients at a server where the mobile devices send statistically distinct gradient information over heterogenous communication links. This paper proposes a Bayesian federated learning (BFL) algorithm to aggregate the heterogeneous quantized gradient information optimally in the sense of minimizing the mean-squared error (MSE). The idea of BFL is to aggregate the one-bit quantized local gradients at the server by jointly exploiting i) the prior distributions of the local gradients, ii) the gradient quantizer function, and iii) channel distributions. Implementing BFL requires high communication and computational costs as the number of mobile devices increases. To address this challenge, we also present an efficient modified BFL algorithm called scalable-BFL (SBFL).  In SBFL, we assume a simplified distribution on the local gradient. Each mobile device sends its one-bit quantized local gradient together with two scalar parameters representing this distribution. The server then aggregates the noisy and faded quantized gradients to minimize the MSE.  We provide a convergence analysis of SBFL for a class of non-convex loss functions. Our analysis elucidates how the parameters of communication channels and the gradient priors affect convergence. From simulations, we demonstrate that SBFL considerably outperforms the conventional sign stochastic gradient descent algorithm when training and testing neural networks using MNIST data sets over heterogeneous wireless networks.

\end{abstract}

\section{Introduction}
Federated learning is a decentralized approach to train machine learning models at a server using distributed and heterogeneous training data sets placed at mobile devices, without sharing raw data with a server \cite{mcmahan2017communication, konevcny2016federated, bonawitz2019towards}.  This machine learning approach has received considerable attention from the research community and industries because of the myriad of applications that require the privacy of user-generated data, such as activity on mobile phones. Federated learning performs distributed model training iteratively with two operations: 1) model optimization with local data sets, and 2) model aggregation, i.e., model averaging \cite{yang2019federated}. In every round, the server sends a global model to a set of available mobile devices. Each device optimizes the model with locally available data and then sends its updated model parameters (or updated local gradient) to the server via communication links. The server updates the global model by averaging the local models or gradients received from the mobile devices and shares it in the next iteration.  



Optimizing a global model using local data sets placed at a massive number of mobile devices is a challenging task \cite{dean2008mapreduce, li2014communication}. The primary challenge is the high communication cost when updating the local computations from the mobile devices and the server.  The communication cost is proportional to both the size of the global model parameters and the number of mobile devices connected to the server. The model size is typically tens of millions when training complex neural networks \cite{liu2017survey}, and the number of mobile devices can be a few hundred and more depending on applications \cite{yang2019federated}. When sending locally computed model parameters simultaneously over wireless links, the server cannot decode them all successfully under a limited bandwidth constraint.   


Lossy compression of local gradients is a practical solution to reduce the size of the message per communication round.  Compression can be performed via sparsification, or quantization \cite{alistarh2017qsgd, wangni2018gradient, alistarh2018convergence, dai2019hyper, bernstein2018signsgd, shi2019distributed, shi2019understanding, horvath2019stochastic,Eldar_vecQ}. One popular approach of  gradient sparsification is to send the top-$k$ magnitude coordinates of a local gradient vector \cite{shi2019understanding, shi2019distributed}.  An alternative approach is to use  optimal vector quantization methods that minimize quantization error \cite{dai2019hyper,Eldar_vecQ}. Such approaches may be challenging to implement in practice. Scalar quantization techniques are very popular due to their simplicity \cite{alistarh2017qsgd,bernstein2018signsgd}. In particular, one-bit gradient quantization with sign stochastic gradient descent (signSGD) \cite{bernstein2018signsgd} has been studied extensively in the recent literature. 

Notwithstanding significant progress in improving communication efficiency by compression, prior work did not consider aggregation of local gradients by jointly embracing heterogeneous local data distributions, varying communication link reliabilities per device, and quantization effects. Most prior works mainly focus on a separate communication and learning system design approach -- decoding the local gradients and aggregating them independently. Under the premise that the local gradients sent by the mobile devices are perfectly decodable at the server using proper modulation and coding techniques per link, simple gradient aggregation methods (e.g., gradient-averaging or model-averaging) have been considered. For example, when receiving faded and noisy versions of one-bit local gradients from mobile devices, the most popular method to aggregate the signs of the gradients is simple majority voting in signSGD \cite{bernstein2018signsgd, bernstein2018majority, sattler2019robust}. One drawback of this strategy is that it fails to achieve the optimal performance in detecting the sign of the sum of one-bit gradients when the distributions of the local gradients and the communication links' qualities are not identical across mobile devices.  Another drawback of this separate design approach is that the server must perform at least as many decoding operations as the number of devices sending the local gradients before aggregation, which may give rise to a significant delay in federated learning systems.  
 
Over-the-air aggregation is a new paradigm to jointly design the communication and learning system by harnessing the superposition nature of the wireless medium \cite{gunduz2019machine, amiri2020federated, chen2019joint, yang2020federated,Eldar_FL, zhu2020one}. Although this approach opens new opportunities in designing the wireless and learning system jointly, it is limited to time division duplex wireless systems, in which channel state information is available at the mobile devices for uplink transmissions. In current frequency-division-duplexing LTE systems,  orthogonalized multiple access such as narrowband IoT may be preferred in implementing federated learning systems \cite{ratasuk2016overview, chen2017narrow}, where each mobile device performs uplink transmission using orthogonal subcarriers.

In this work, we consider a federated learning system, in which mobile devices send  local gradients to the server per communication round using orthogonalized multiple access channels, each with heterogeneous wireless link quality. We suggest performing  joint detection and aggregation of the local gradients using a Bayesian framework.  Our key contributions are summarized as follows.

\begin{itemize}
    \item We introduce a Bayesian approach for federated learning called BFL. The key innovation of the proposed algorithm is to jointly exploit 1) the joint distribution of local gradients, 2) the one-bit quantizer function for gradient compression, and 3) the distributions of communication channels when aggregating the quantized local gradients. Under the premise that the prior distributions of the local gradients are jointly Gaussian, this aggregation method is optimal in the sense of minimizing the mean-squared error (MSE) per communication round. In practice, the MSE-optimal aggregation method may be difficult to implement due to the lack of knowledge about the joint distribution of local gradients and additional communication costs. 
    
    \item We present an efficient algorithm referred to as scalable-BFL (SBFL) to resolve the practical challenges in BFL. In SBFL we use 1) a simple prior distribution model for the local gradient and 2) a sub-optimal Bayesian aggregation function.  Specifically, we model each local gradient vector's prior distribution as a simple independent identically distributed (iid) Gaussian or Laplacian in which the mean and variances are estimated in every round. Instead of finding the MSE-optimal aggregation function that requires knowledge of joint distribution of the priors and involves multi-dimensional integrations, we use a computationally-efficient Bayesian aggregation function in closed-form, which achieves the minimum MSE under an iid Gaussian prior.  Interestingly, the proposed function can be implemented with a shallow neural network using the hyperbolic tangent activation function. The shallow neural network parameters are determined by the standard deviation of the prior distribution, signal-to-noise ratios (SNRs) of communication links, and channel fadings. Under the assumption of independent Gaussian prior, we also present the Bussgang-based linear minimum MSE (BLMMSE) aggregation method for SBFL and compare it with the proposed nonlinear MMSE aggregation function. 
    
	\item We provide a convergence analysis of the proposed SBFL algorithm for a class of non-convex loss functions. The crucial step in our proof is to show that the proposed Bayesian aggregation function is the unbiased estimator of the sum of sign gradients, and the corresponding MSE is bounded by a constant per communication round.  Then, under mild assumptions, we derive the convergence rate of SBFL to show how fast it decreases a loss function in terms of the number of iterations and the MSE. Our result shows that the SBFL algorithm ensures convergence to a stationary point of a non-convex and smooth loss function.
	
	\item We evaluate the performance of SBFL in two scenarios, including 1) synthetic datasets and  2) MNIST datasets, i.e., non-synthetic datasets. First, we train a linear estimator for which the loss function is strongly convex with both homogeneous and heterogeneous datasets. In such settings, we demonstrate that SBFL results in lower  training loss than signSGD for both datasets. Then, we train a convolutional neural network (CNN) over the MNIST dataset to verify the performance gain in a deep FL setup.  We demonstrate that SBFL improves the accuracy of the trained model when using heterogeneous data sets in wireless networks. The gain of SBFL results from the exploitation of side-information including both the (approximate) prior distribution of local gradients and heterogeneous channel distributions when aggregating the gradient information per communication round. From simulations, we verify that this gain is attainable when training convex and non-convex loss functions with different data settings. 
\end{itemize}

The rest of this paper is organized as follows: Section II provides some background of FL using signSGD. Section III introduces the concept of BFL. Section VI presents SBFL along with its possible generalizations. Convergence analysis of SBFL for optimizing a smooth and non-convex loss function is presented in Section V. Section VI provides simulation results to show the gain of SBFL over signSGD. Finally, we conclude in Section VII and suggest possible future work. Detailed proofs of our main results are provided in the appendix.

    \section{ Wireless Federated Learning}  
    We consider a wireless federated learning system, in which a server (e.g., a base station) is connected to $K$ mobile devices (e.g., mobile phones) via a wireless network.  The primary learning task is to collaboratively train a global model parameter across these distributed mobile devices without uploading local privacy-sensitive data.  Training data sets are separately placed at each wireless device.  Let $\mathcal{X}_k=\left\{{\bf x}_{k}^{i}, {\bf z}_{k}^{i}\right\}_{i=1}^{N_k}$ be the training set of device $k$ for $k\in [K]$, where $N_k=|\mathcal{X}_k|$ is the number of training data points stored at mobile device $k$. Each training data point ${\bf x}_k^{i}$ is drawn from a distribution $P_{{\sf x}_k}$.  Note that the distributions of different devices are distinct. We denote the total number of data points for training by $N=\sum_{k=1}^KN_k$.     We define a loss function at the $k$th device with a shared global model parameter ${\bf w} \in \mathbb{R}^M$ as
    \begin{align}
        f_k({\bf w}) = \frac{1}{N_k} \sum_{i=1}^{N_k}\ell\left({\bf x}_{k}^{i}, {\bf z}_{k}^{i};{\bf w}\right),
    \end{align}
    where $\ell(\cdot): \mathbb{R}^M \times\mathbb{R} \rightarrow \mathbb{R}$ is the loss function of the prediction with the shared model parameter ${\bf w}$ for  training example $\left({\bf x}_k^i,{\bf z}_{k}^{i}\right)$.   This loss function can be either convex or non-convex. The \textit{global loss function} over all distributed training data sets is a sum of local loss functions,
    \begin{align}
        F\left({\bf w}\right)=\frac{1}{N} \sum_{k=1}^{K}\sum_{i=1}^{N_k}\ell\left({\bf x}_{k}^{i}, {\bf z}_{k}^{i};{\bf w}\right)= \sum_{k=1}^{K}\frac{N_k}{N}f_k({\bf w}).
    \end{align}
    
    The federated learning task is to optimize the global model parameter ${\bf w}$ without sending local data points to the server.  We explain how this global model parameter is optimized in a distributed manner using a wireless network. In particular, we focus on signSGD for the distributed optimization because it allows considerably reducing the costs in uplink communications.  Federated learning using signSGD consists of the following five operations:

        {\bf Model sharing via downlink communications:} In the $t$th communication round, the server broadcasts a global model parameter ${\bf w}^t$ to the mobile devices. We assume that this downlink communication is delay and error-free, i.e., all mobile devices decode the message ${\bf w}^t$ perfectly. 
    
    {\bf Model optimization with local data:} Each mobile device computes local gradient $\nabla f_k\left({\bf w}^{t}\right)$ using both its own training data set $\mathcal{X}_k$ for $k\in [K]$ and the model parameter received from the server ${\bf w}^{t}$. We let ${\bf g}_k^{t}=\left[g_{k,1}^t,\ldots, g_{k,M}^t\right]^{\top}=\nabla f_k\left({\bf w}^{t}\right)$ be the local gradient computed at mobile device $k\in [K]$ at communication round $t\in [T]$.  For ease of exposition, we assume a full-batch size for the gradient computation.

            {\bf One-bit compression of the local gradient:} To reduce the communication cost, each mobile device compresses local gradient ${\bf g}_k^{t}$ using the one-bit quantizer by simply taking the sign of ${\bf g}_k^{t}$, i.e., 
    \begin{align}
        {\bf \hat g}_k^t ={\sf sign}\left({\bf g}_k^t\right).
    \end{align}
     

   {\bf Uplink transmission:} Mobile device $k$ for $k\in [K]$ sends one-bit quantized local gradient information ${\bf \hat g}_k^t$ to the server over a faded and noisy communication channel. We mainly consider the orthogonalized massive access channel, in which $K$ local devices send the one-bit gradient parameter to the server simultaneously by using orthogonal subchannels. We focus on the real part of the complex base-band equivalent model for ease of exposition, which can readily be extended using quadrature amplitude modulation signaling to orthogonal BPSK symbols.

    
    Let $h_k^t\in \mathbb{R}$ be the fading channel coefficient of the $k$th subchannel during communication round $t$. This fading channel is assume to be a {\it constant} during the uplink transmission, and change independently every communication round $t$ while remaining constant during the communication round, i.e., a block fading process. Then, the received signal of the server from the $k$th subchannel of the $t$th communication round is
    \begin{align}
        {\bf  y}_{k}^t=h_k^t {\sf sign}\left({\bf g}_{k}^t\right) +{\bf  n}_{k}^t, 
        \label{eq:channel}
    \end{align}
    where ${\bf n}_{k}^t$ is the Gaussian noise of the $k$th subchannel with zero-mean and variance $\sigma^2$, i.e., ${\bf n}_{k}^t\sim \mathcal{N}\left({\bf 0},\sigma^2_k{\bf I}\right)$. From \eqref{eq:channel}, the likelihood of ${\bf y}_{k}^t$ given ${\bf g}_{k}^t$ and assuming fixed $h_k^t$ is
    \begin{align}
        P\left({\bf y}_{k}^t|{\bf g}_{k}^t \right)=\prod_{m=1}^M\frac{1}{\sqrt{2\pi \sigma^2}}\exp\left(-\frac{\left|y_{k,m}^t-h_k^t{\sf sign}\left({g}_{k,m}^t\right)\right|^2}{2\sigma^2_k}\right). \label{eq:likelihood}
    \end{align}

  {\bf Majority-voting for aggregation and model update:} The server performs maximum-likelihood detection (MLD) to estimate one-bit local gradients per subchannel. Specifically, using the received signal of the $k$th subchannel, one-bit local gradient information is estimated as
  \begin{align}
  	 {\bf \hat g}_k = {\sf sign}\left( \frac{{\bf y}_k^t}{h_k^t}\right),
  \end{align} 
for $k\in [K]$. Then, to attain a global gradient estimate, the server applies a majority-vote based aggregation method, which takes the sign of the sum of the estimated one-bit local gradients:
  \begin{align}
  	 {\bf \hat g}_{\sf MV} = {\sf sign}\left(\sum_{k=1}^K{\bf \hat g}_k \right)={\sf sign}\left(\sum_{k=1}^K{\sf sign}\left( \frac{{\bf y}_k^t}{h_k^t}\right)\right).
  \end{align} 
  After computing ${\bf \hat g}_{\sf MV}$, the server updates the model parameter as    \begin{align}
        {\bf w}^{t+1}&= {\bf w}^t -\gamma^t {\bf \hat g}_{\sf MV}^t,
    \end{align}
    where $\gamma^t\in(0,1)$ is a learning rate of the gradient descent algorithm. The updated model parameter is then sent to the mobile devices and the learning procedure moves to the next round.  To speed up convergence, the server may update the model parameter by taking a weighted average between the currently estimated gradient and the previously estimated gradients ${\bf m}^{t}=\delta^t {\bf m}^{t-1}+{\bf \hat g}^{t}_{\sf MV}$ as
     \begin{align}
     {\bf w}^{t+1}= {\bf w}^{t} -\gamma^t {\bf m}^{t},
    \end{align}
     where $\delta^t\in(0,1)$ is a weight parameter and ${\bf m}^{0}$ is the initial value of the moment.

\section{ Bayesian Federated Learning}
In this section, we present a framework for BFL.  The motivation of BFL is to find an optimal aggregation method of local gradients from a Bayesian viewpoint.  The majority-voting based aggregation is effective in detecting the sign of the sum of the one-bit local gradients when the SNRs and  fadings are identical across mobile devices. In heterogenous wireless networks, however, this majority-voting based aggregation cannot be optimal in detecting the sign of the local gradients' sum due to the distinct reliabilities of the communication links. In addition, to enhance the FL performance, the server may require to accurately know the sum of local gradients rather than the sum of the local gradients' signs.  Motivated by these facts, we aim at finding an optimal aggregation method of the local gradients in the sense of minimizing MSE. The key idea of BFL is to aggregate local gradients per iteration by jointly harnessing the knowledge of 1) the prior distributions of local gradients, 2) the gradient quantizer function,  and 3) the likelihood functions of the communication channels.



 \subsection{Algorithm}

    {\bf Computation of local gradient prior:} Unlike conventional federated learning, BFL exploits the local gradient's prior information. Let $P({\bf g}_{k}^t)$ be the prior distribution of  ${\bf g}_{k}^t$ computed by mobile device $k$ at communication round $t$. Characterizing the prior distribution is very challenging because it not only depends on the loss function $f_k({\bf w}^t)$ but also on the underlying distribution of data samples; the exact characterization of the prior distribution is impossible in general. To overcome this difficulty, we model the distribution of ${\bf g}_k^{t}$ as a Gaussian prior distribution with a proper moment matching technique. Specifically, we assume the prior is a multivariate normal distribution with mean vector ${\bm \mu}_k^t=\mathbb{E}\left[ {\bf g}_{k}^t\right]   $ and covariance matrix ${\bm \Sigma}_k^t =\mathbb{E}\left[ \left({\bf g}_{k}^t-{\bm \mu}_k^t \right)\left({\bf g}_{k}^t-{\bm \mu}_k^t\right)^{\top}\right]$ as
    \begin{align}
        P\left( {\bf g}_{k}^t \right)=\frac{1}{(2\pi)^{\frac{M}{2}} \det({\bm \Sigma}_k^t)} e^{-\frac{1}{2}\left({\bf g}_k^t -{\bm \mu}_k^t\right)^{\top}\left({\bm \Sigma}_k^t\right)^{-1} \left({\bf g}_k^t -{\bm \mu}_k^t\right)}.
        \label{eq:prior}
    \end{align}
   The prior distribution can change over communication rounds according to the underlying distribution of data samples and the local loss function.

  Notice that local gradients ${\bf g}_k^t=\nabla f_k\left({\bf w}^{t}\right)$ and ${\bf g}_{\ell}^t=\nabla f_{\ell}\left({\bf w}^{t}\right)$ for $k\neq \ell$ are statistically correlated because they are evaluated at a common model parameter ${\bf w}^t$. To illustrate this, consider a linear regression loss function 
  \begin{align}
  	f_k({\bf w}^t)=\|{\bf X}_k^{\top}{\bf w}^t-{\bf y}_k\|_2^2,
  \end{align}
   where ${\bf X}_k=\left[ {\bf x}^{1}_k, {\bf x}^{2}_k,\ldots, {\bf x}^{N_k}_k \right]\in \mathbb{R}^{M\times N_k}$ and ${\bf z}_k=\left[z_k^1, \ldots, z_k^{N_k}\right]^{\top}\in \mathbb{R}^{N_k\times 1}$ for $k\in [K]$. Furthermore, we assume all local data distributions $P({\bf X}_k, {\bf z}_k)$ for $({\bf X}_k,{\bf z}_k)$ are statistically independent, each with $\mathbb{E}\left[{\bf X}_k^{\top}{\bf X}_k\right]={\bf R}_k$ and $\mathbb{E}\left[ {\bf X}_k^{\top}{\bf z}_k\right]={\bf 0}$. In this case, the local gradient at mobile device $k\in[K]$ computed using model parameter ${\bf w}^t$ is ${\bf g}_k^t = {\bf X}_k^{\top}{\bf X}_k {\bf w}^t - {\bf X}_k^{\top}{\bf zz}_k$. The correlation matrix between ${\bf g}_k^t$ and ${\bf g}_{\ell}^t$ conditioned on ${\bf w}^t$ is computed as
\begin{align}
 \mathbb{E}\left[{\bf g}_k^t\left({\bf g}_{\ell}^t\right)^{\top}\mid {\bf w}^t\right]&= \mathbb{E}\left[ {\bf X}_k^{\top}{\bf X}_k {\bf w}^t\left( {\bf w}^t\right)^{\top}\left({\bf X}_{\ell}^{\top}{\bf X}_{\ell}\right)^{\top}\right] \nonumber\\
 &= \mathbb{E}\left[ {\bf X}_k^{\top}{\bf X}_k\right] {\bf w}^t\left( {\bf w}^t\right)^{\top} \mathbb{E}\left[\left({\bf X}_{\ell}^{\top}{\bf X}_{\ell}\right)^{\top}\right]\nonumber\\
 &= {\bf R}_k {\bf w}^t\left( {\bf w}^t\right)^{\top}{\bf R}_{\ell}.
	\end{align}z The correlation structure between ${\bf g}_k^t$ and ${\bf g}_{\ell}^t$ can be more complicated, if the data distributions $P({\bf X}_k, {\bf z}_k)$ and $P({\bf X}_{\ell}, {\bf z}_{\ell})$ are also correlated across mobile devices.

    {\bf One-bit compression of the local gradient:} To reduce the communication cost, each mobile device compresses its local gradient ${\bf g}_k^{t}$ using a one-bit quantizer. To make the one-bit quantized output uniformly distributed, the mobile device first performs zero-mean normalization, i.e.,	
    \begin{align}
    	{\bf \bar g}_{k}^t ={\bf g}_{k}^t - {\bm \mu}_k^t.
    \end{align}
The normalized gradient is then quantized by  taking the sign of ${\bf \bar g}_k^{t}$, i.e., 
    \begin{align}
        {\bf \tilde g}_k^t ={\sf sign}\left({\bf \bar g}_k^t\right).
    \end{align}
    Each mobile devices sends both this binary gradient information $ {\bf \tilde g}_k^t$ with the parameters of the prior distribution, ${\bm \mu}_k^t$ and ${\bm \Sigma}_k^t$, as additional side-information. In this section, we assume that ${\bm \mu}_k^t$ and ${\bm \Sigma}_k^t$ are delivered to the server perfectly. 
    

%
%
%
%
    


    {\bf Bayesian aggregation of local gradients:}  In the $t$th communication round, the server updates the parameter using channel outputs $\left\{{\bf y}_{k}^{t}\right\}$ for $k\in[K]$. We present the optimal gradient aggregation function from a Bayesian perspective.  The server has knowledge about the channel distributions $P\left({\bf y}_{k}^t \mid {\bf \bar g}_{k}^t\right)$ and the marginal distributions of local gradients $P\left( {\bf \bar g}_{k}^t\right)$ for $k\in [K]$. In BFL, we assume that the server knows a joint distribution of local gradients, i.e.,  $P\left(  {\bf \bar g}_{1}^t, \ldots, {\bf \bar g}_{K}^t\right)$. This is a genie-aided assumption to derive the MSE-optimal aggregation function.
    
    Let $\triangleq U({\bf y}_1^t,\ldots, {\bf y}_K^t):\mathbb{R}^{MK}\rightarrow\mathbb{R}^{M}$ be an aggregation function to estimate the sum of local gradients ${\bf \bar g}^t_{\Sigma}=\sum_{k=1}^K{\bf \bar g}^t_{k}$.    Our goal is to find an aggregation function $U\left({\bf y}_1^t,\ldots, {\bf y}_K^t\right)$ such that
  \begin{align}
  	{\bf \hat g}^t_{\sf MMSE} =\argmin_{U:\mathbb{R}^{MK}\rightarrow\mathbb{R}^{M}}  \mathbb{E}\left[\left\|{\bf   \bar g}_{\Sigma}^t-U({\bf y}_1^t,\ldots, {\bf y}_K^t)\right\|_2^2\right] \label{eq:opti}.
  \end{align}
The MSE-optimal aggregation function for the optimization problem in \eqref{eq:opti} is obtained by the conditional expectation \cite{stark1986probability}:
\begin{align}
	{\bf\hat g}_{{\sf MMSE}}^t= \sum_{k=1}^K{\bm \mu}_{k}^t +\mathbb{E}\left[	{\bf  \bar g}_{k}^t | {\bf y}_1^t, \ldots, {\bf y}_K^t\right], \label{eq:mmse}
\end{align}    
where
   \begin{align}
   	 \mathbb{E}\left[	{\bf   \bar g}_{k}^t | {\bf y}_1^t, \ldots, {\bf y}_K^t\right]=  \int_{-\infty}^{\infty}{\bf \bar g}_{k}^t \frac{P\left({\bf y}_{1}^t, \ldots, {\bf y}_{K}^t\mid {\bf \bar g}_{k}^t\right)}{P\left({\bf y}_{1}^t, \ldots, {\bf y}_{K}^t\right) } P\left({\bf \bar g}_{k}^t \right) {\rm d}{\bf \bar g}_k^t. \label{eq:conditional-mean}
   \end{align}
 Using the chain rule and the independence between ${\sf  y}_{\ell}^t$  for $\ell\in [K]$ conditioned on ${\bf \bar g}_k^t$, the conditional distribution in \eqref{eq:conditional-mean} can be factorized as
    \begin{align}
   	P\left({\bf y}_{1}^t, \ldots, {\bf y}_{K}^t\mid {\bf \bar g}_{k}^t\right) 
   	&=  P\left({\bf y}_{k}^t \mid {\bf \bar g}_{k}^t\right) \prod_{i\neq k}^K P\left({\bf y}_{i}^t \mid {\bf \bar g}_{k}^t\right)  \nonumber\\
   	&=P\left({\bf y}_{k}^t \mid {\bf \bar g}_{k}^t\right) \prod_{i\neq k}^K\int_{-\infty}^{\infty}P\left({\bf y}_{i}^t \mid {\bf \bar g}_{\ell}^t\right)  P\left({\bf \bar g}_{\ell}^t \mid   {\bf \bar g}_{k}^t\right)  {\rm d} {\bf \bar g}_{\ell}^t,\label{eq:posterior}
   	   \end{align}
    where the last equality follows from the fact that ${\bf \bar g}_{k}^t\rightarrow {\bf \bar g}_{\ell}^t \rightarrow {\bf  y}_{\ell}^t$ forms a Markov chain and $ P\left({\bf \bar g}_{\ell}^t \mid   {\bf \bar g}_{k}^t\right)$ for $\ell\in [K]/[k]$ is obtained from the marginalization of joint distribution $P\left(  {\bf \bar g}_{1}^t, \ldots, {\bf \bar g}_{K}^t\right)$. Plugging \eqref{eq:posterior} into \eqref{eq:conditional-mean} leads to the optimal aggregation function. Using ${\bf\hat g}_{\sf MMSE}^t$ in \eqref{eq:mmse}, the server updates the model parameter as    \begin{align}
        {\bf w}^{t+1}&= {\bf w}^t -\gamma {\bf\hat g}_{\sf MMSE}^t.
    \end{align}

    \subsection{Limitations}
  Under the correlated Gaussian priors, there are several critical issues that hinder the use of BFL in practice as specified in the sequel.

    
 {\bf Limited knowledge about the prior distribution:} Since each mobile device sends only ${\bm \mu}_k^t$ and ${\bm \Sigma}_k^t$ to the server, it acquires information only on the marginal distributions of local gradients, i.e., $P({\bf \bar g}_k)$ for $k\in [K]$.  However, to compute $\mathbb{E}\left[	{\bf \bar g}_{k}^t | {\bf y}_1^t, \ldots, {\bf y}_K^t\right]$ for $k\in [K]$, the server must also know conditional distributions $P\left({\bf \bar g}_{\ell}^t \mid   {\bf \bar g}_{k}^t\right) $ for $\ell\neq k$ and $\ell,k\in [K]$. 
 
  {\bf Computational complexity:} One possible approach to resolve this issue is to ignore the correlation among all local gradients. By treating them as statistically independent Gaussian random variables, the server can use the MSE-optimal aggregation function as
        \begin{align}
        	{\bf \hat g}_{\Sigma'}^t=\sum_{k=1}^K{\bm \mu}_{k}^t + \mathbb{E}\left[	{\bf \bar g}_{k}^t | {\bf y}_k^t\right],\label{eq:subMMSE}
        \end{align}
        where 
        \begin{align}
        	\mathbb{E}\left[	{\bf \bar g}_{k}^t | {\bf y}_k^t\right]&=  \int_{-\infty}^{\infty}{\bf \bar g}_{k}^t \frac{P\left(  {\bf y}_{k}^t\mid {\bf \bar g}_{k}^t\right)}{P\left(  {\bf y}_{k}^t\right) } P\left({\bf \bar g}_{k}^t \right) {\rm d}{\bf \bar g}_k^t.
        \end{align}
Computing $\mathbb{E}\left[	{\bf \bar g}_{k}^t | {\bf y}_k^t\right]$, however, also requires an $M$-dimensional integration, which gives rise to a high computational complexity as the model size, $M$, increases.     
        
                 
  {\bf Communication cost:} The communication cost is also a significant bottleneck. The server requires additional information from all mobile devices about parameters, i.e., ${\bm \mu}_k^t\in  \mathbb{R}^M$ and ${\bm {\Sigma}}_k^t\in \mathbb{R}^{M\times M}$ per communication round. When the model size is large, sending such additional information with the one-bit gradients increases the communication cost considerably.
  
  {\bf Accuracy of model parameter estimation:} The Gaussian approximation of the prior distribution can be inaccurate for some loss functions and local data distributions. Furthermore, accurately estimating ${\bm \mu}_k^t\in  \mathbb{R}^M$ and ${\bm \Sigma}_k^t\in \mathbb{R}^{M\times M}$ is another challenge because it can change over every communication round.  When estimating the parameters inaccurately, the prior distribution mismatch effect might cause a performance loss, which eventually can degrade the learning performance.

   
\section{Scalable Bayesian Federated Learning}

This section presents a computation-and-communication efficient BFL algorithm called SBFL. SBFL is scalable to the number of mobile devices and robust to the heterogeneities of both local data distributions and communication link qualities.

    \subsection{Algorithm}
      To make the learning algorithm computation-and-communication efficient, we simplify the local gradient prior and parameterize it by two scalars. Then, we use an element-wise Bayesian aggregation function by assuming all elements of each local gradient are distributed as iid Gaussian random variables. 
    
    
    {\bf Simplification of the local gradient prior:}  We first model all elements of ${\bf g}_k^t$ as iid Gaussian with common mean $\mu_k^t$ and variance $\left(\nu_k^t\right)^2$, i.e.,
    \begin{align}
    		P\left( {\bf g}_{k}^t \right)\simeq \prod_{m=1}^M\frac{1}{\sqrt{2\pi}\nu_k^t}\exp\left(-\frac{|g_{k,m}^t-\mu_k^t|^2}{2(\nu_k^t)^2}\right). \label{eq:iidprior} 
    \end{align}
Each mobile device then estimates its mean and variance by computing the sample mean and variance of ${g}_{k,m}^t$ as
    \begin{align}
        \mu_k^t = \frac{1}{M}\sum_{m=1}^M g_{k,m}^t
        \label{eq:prior1}
    \end{align} 
    and
    \begin{align}
        \left(\nu_k^t\right)^2 = \frac{1}{M}\sum_{m=1}^M \left(g_{k,m}^t\right)^2 -\left(\mu_k^t\right)^2.
        \label{eq:prior2}
    \end{align} 
    Although this prior distribution is simple,  it can still capture distinct statistical information about the prior distribution of each mobile device with two scalars $\mu_k^t$ and $\nu_k^t$ for $k\in [K]$. This simplification significantly reduces both the communication cost in delivering the information about the prior and the complexity in computing the aggregation function. Specifically, each mobile device only needs to send two scalars $\mu_k^t \in \mathbb{R}$ and $\nu_k^t \in \mathbb{R}^{+} $ in addition to its one-bit gradient to the server. Therefore, it considerably diminishes the communication cost compared to sending a large dimensional mean vector ${\bm \mu}_k^t\in \mathbb{R}^M$ and covariance matrix ${\bm \Sigma}_k^t \in \mathbb{R}^{M\times M}$ required for BFL. The mobile devices perform zero-mean normalization before one-bit quantization, i.e.,	
\begin{align}
	{\bar g}_{k,m}^t ={g}_{k,m}^t - \mu_k^t.
\end{align}    

{\bf Quantization:} After normalization, mobile device $k\in[K]$ compresses ${\bf \bar g}_{k}^t$ to a binary vector using the one-bit quantizer, ${\bf \tilde g}_{k}^t={\sf sign}\left(	{\bf \bar g}_{k}^t \right)$. The mobile devices also quantize $\mu_k^t$ and $\nu_k^t$ using a $B$-bit scalar quantizer. Let $\mathcal{Q}\!=\!\{q_1,q_2, \ldots, q_{2^B}\}$ and $\mathcal{B}\!=\!\{b_0,b_1,\ldots, b_{2^B}\}$ be sets of quantized outputs and bin boundaries, respectively. Then, the quantization function ${\sf Q}_B: \mathbb{R}\rightarrow \mathcal{Q}$ maps an input to a discrete-valued output in $\mathcal{Q}$ as 
\begin{align}
	{Q}_{\sf B}(x) = \sum_{\ell=1}^{2^B} q_\ell {\bf 1}_{\left\{ b_{\ell} \leq x \leq b_{\ell+1}\right\}}. \label{eq:Qbit}
\end{align}
 Let ${\hat \mu}_k^t=Q_{\sf B}({ \mu}_k^t)$ and ${\hat \nu}_k^t=Q_{\sf B}({ \nu}_k^t) $ be the quantizer output of ${ \mu}_k^t$ and ${ \nu}_k^t$, respectively. 
 
 {\bf Uplink transmission:} The uplink transmission packet structure is illustrated in Fig. \ref{fig1}. As can be seen, mobile device $k\in[K]$ sends ${\sf sign}\left(	{\bf \bar g}_{k}^t \right)$ with  $\mu_k^t$ and $\nu_k^t$ to the server at communication round $t$. When sending ${\hat \mu}_k^t$ and ${\hat \nu}_k^t$, device $k$ encodes 2$B$ information bits about ${\hat \mu}_k^t$ and ${\hat \nu}_k^t$ using a powerful channel code (e.g., polar codes) so that the server perfectly decodes them. For instance, when a code rate is fixed to $r<1$,  the total of $\frac{2B}{r}$ binary coded symbols are additionally transmitted in conjunction with $M$ binary symbols for sending ${\sf sign}\left(	{\bf \bar g}_{k}^t \right)$. Since model size $M \geq 10^{5}$ is much larger than $\frac{2B}{r} <50$, this additional communication overhead is negligible.

    {\bf Scalable Bayesian aggregation:} When receiving $y_{k,m}^t$ at the server, our strategy is to perform element-wise MMSE estimation using both ${\hat \mu}_k^t$ and ${\hat \nu}_k^t$. Our simplified prior distribution strategy allows obtaining a closed-form expression for a non-linear Bayesian aggregation function.



            \begin{figure*}[t]
    	\centering
        \includegraphics[width=1\linewidth]{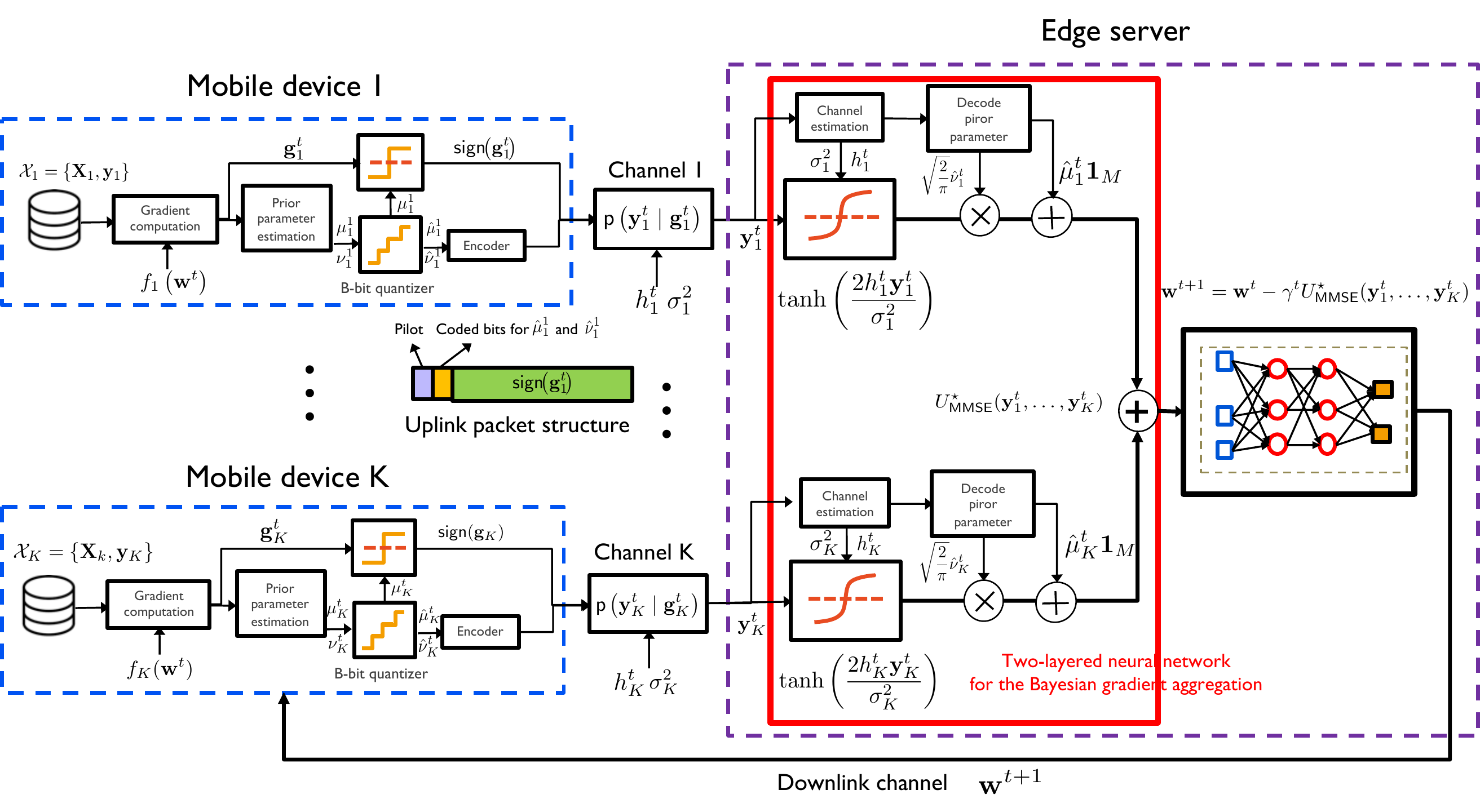}\vspace{-0.1cm}
      \caption{An illustration of the proposed scalable Bayesian gradient aggregator, which can be implemented as a simple neural network.}
      \label{fig1}\vspace{-0.5cm}
    \end{figure*}    
 
    {\bf Proposition 1:} Let ${g}_{k,m}^t$ be iid Gaussian random variables with ${\hat \mu}_k^t$ and ${\hat \nu}_k^t$ for $k\in [K]$ and $m\in [M]$. Also, let $U({\bf y}_1^t,\ldots, {\bf y}_K^t):\mathbb{R}^{MK}\rightarrow\mathbb{R}^M$ be a gradient aggregation function. Then, the aggregation function that minimizes $\mathbb{E}\left[\left\|{\bf g}_{\Sigma}^t-U({\bf y}_1^t,\ldots, {\bf y}_K^t)\right\|_2^2\right]$
  is
    \begin{align}
        U_{\sf MMSE}^{\star}({\bf y}_1^t,\ldots, {\bf y}_K^t)
        &= \sum_{k=1}^K\left\{ {\hat \mu}_k^t{\bf 1}_M+ \sqrt{\frac{2}{\pi}}{\hat \nu}_k^t\tanh\left(\frac{2h_k^t {\bf y}^t_{k}}{\sigma_k^2}\right) \right\}, \label{eq:mmseAgg}
    \end{align}
            	where ${\bf 1}_M$ is the all-ones vector with dimension $M$. 

    \proof
    See Appendix \ref{proof1}. 
    \endproof

     The proposed aggregation method is implemented with a two-stage operation. In the first stage, it performs the non-linear mapping from ${\bf y}_k^t$ to ${\hat \nu}_k^t\sqrt{\frac{2}{\pi}} \tanh\left(\frac{2h_k^t {\bf y}^t_{k}}{\sigma_k^2}\right)$ using the prior distribution parameter $({\hat \mu}_{k}^t,{\hat \nu_k}^t)$ and the communication channel parameters $h_k^t$ and $\sigma_k^2$. In the second stage, it sums ${\hat \nu}_k^t\sqrt{\frac{2}{\pi}} \tanh\left(\frac{2h_k^t {\bf y}^t_{k}}{\sigma_k^2}\right)$ with the means to obtain the estimate of the local gradient ${\bf \bar g}_{k}^t$.       To implement SBFL, the server requires to estimate the communication channel parameters $h_k^t$ and $\sigma_k^2$. The channel parameter $h_k^t$ can be estimated using conventional pilot signaling, and the estimation accuracy is linearly proportional to the length of the pilot signals. Therefore, it is possible to accurately estimate $h_k^t$ at the expense of uplink spectral efficiency. This algorithm is summarized in Algorithm \ref{alg:SBFL_noDC}. One interesting observation is that, as illustrated in Fig. \ref{fig1}, our aggregation function can be implemented with a two-layered neural network.

When ${\bf \bar g}_{k}^t$ and ${\bf \bar g}_{\ell}^t$ for $k\neq \ell\in [K]$ are correlated Gaussian, the derived aggregation function in \eqref{eq:mmseAgg} is suboptimal because under this correlated prior assumption $\sum_{k=1}^K\mathbb{E}\left[	{\bf  \bar g}_{k}^t | {\bf y}_1^t,\ldots,  {\bf y}_K^t\right]\neq \sum_{k=1}^K\mathbb{E}\left[	{\bf \bar g}_{k}^t | {\bf y}_k^t\right]$ as explained in \eqref{eq:mmse}. Nevertheless, under the correlated prior distribution of local gradients, SBFL is still a promising algorithm due to its computational scalability in both the number of mobile devices and the model size at the cost of a MSE performance loss.

    \begin{algorithm}[tb]
       \caption{SBFL}
       \label{alg:SBFL_noDC}
        \begin{algorithmic}[1]
           \REQUIRE Learning rate $\gamma^0$, momentum $\delta^t$, $K$ mobile devices
           \STATE {\bfseries Initialize:}  ${\bf w}^0$; ${\bf m}^{0}={\bf 0}$
           \FOR{$t = 0,\dots,T-1$}
           \STATE ~~{\bfseries on each device} $k$
           \STATE ~~~~~~{\bfseries receive }${\bf w}^t$
           \STATE ~~~~~~{\bfseries compute }${\bf g}_k^{t} = \nabla f_k({\bf w}^{t})$, $\mu_k^t$, and $\nu_k^t$
           \STATE ~~~~~~{\bfseries send }${\sf sign}({\bf g}_k^{t})$, ${\hat \mu}_k^t=Q_{\sf B}\left({ \mu}_k^t\right)$, and ${\hat \nu}_k^t=Q_{\sf B}\left({ \nu}_k^t\right)$
           \STATE ~~{\bfseries on server}
           \STATE ~~~~~~{\bfseries receive} ${\bf y}_k^t$, ${\hat \mu}_k^t$, and ${\hat \nu}_k^t$ for $k\in [K]$
           \STATE ~~~~~~${\bf m}^{t+1}=\delta^t {\bf m}^{t}+\sum_{k=1}^K\left\{ {\hat \mu}_k^t{\bf 1}_M+ \sqrt{\frac{2}{\pi}}{\hat \nu}_k^t\tanh\left(\frac{2h_k^t {\bf y}^t_{k}}{\sigma_k^2}\right) \right\}$
           \STATE ~~~~~~${\bf w}^{t+1}= {\bf w}^t -\gamma^t {\bf m}^{t+1}$
           \STATE ~~~~~~{\bfseries send }${\bf w}^{t+1}$
           \ENDFOR
        \end{algorithmic}
    \end{algorithm}\vspace{-0.5cm}

    \subsection{Special Case and Generalizations}
    To better understand the scalable Bayesian aggregator, it is instructive to consider some special cases and possible generalizations.
    \vspace{0.2cm}

      \subsubsection{Bussgang-based Linear MMSE Aggregation}
    
    Another strategy is to use the Bussgang-based linear MMSE (BLMMSE) aggregation function.  Using Bussgang's theorem \cite{Bussgang}, the one-bit quantization output can be represented as a linear combination of the quantization input and quantization noise as  
\begin{align}
	{\bf \tilde g}_k^t={\sf sign}({\bf \bar g}_k^t) =  {\bf B}_k^t{\bf \bar g}_k^t +{\bf q}_k^t,\label{eq:linear}
\end{align}
where ${\bf B}_k^t\in \mathbb{R}^{M\times M}$ is a linear quantization operator called the Bussgang matrix and ${\bf q}_k^t\in \mathbb{R}^{M\times 1}$ is the quantization noise, which is uncorrelated with ${\bf \bar g}_k^t $. Under the premise that ${\bf \bar g}_k^t$ is correlated Gaussian random vector, i.e., ${\bf \bar g}_k^t\sim \mathcal{N}\left({\bf 0},{\bm {\Sigma}}_k^t\right)$, the Bussgang matrix that minimizes the quantization error under the LMMSE criterion is given by \cite{Bussgang}:
\begin{align}
	{\bf B}_k^t=\sqrt{\frac{2}{\pi}} {\rm diag}\left({\bm {\Sigma}}_k^t\right)^{-\frac{1}{2}}.\label{eq:Buss}
\end{align}
Also, let ${\bm {\tilde \Sigma}}_k^t=\mathbb{E}\left[{\bf \tilde g}_k^t({\bf \tilde g}_k^t)^{\top}\right]$ be the autocorrelation matrix of the one-bit quantization output. By the arcsin law \cite{Bussgang2}, the autocorrelation matrix is given by
\begin{align}
	{\bm {\tilde \Sigma}}_k^t=\frac{2}{\pi}\left[{\rm arcsin}\left( {\rm diag}({\bm {\Sigma}}_k^t)^{-\frac{1}{2}}{\bm {\Sigma}}_k^t  {\rm diag}({\bm {\Sigma}}_k^t) ^{-\frac{1}{2}}\right)\right]. \label{eq:arsin}
\end{align}
Using ${\bm {\Sigma}}_k^t$, ${\bf B}_k^t$ in \eqref{eq:Buss}, and ${\bm {\tilde \Sigma}}_k^t$ in \eqref{eq:arsin}, the covariance matrix of ${\bf q}_k^t$ is computed as
\begin{align}
	{\bf C}_{{\bf q}_k^t{\bf q}_k^t}={\bm {\tilde \Sigma}}_k^t -{\bf B}_k^t{\bm {\Sigma}}_k^t({\bf B}_k^t)^{\top}.
\end{align}
Plugging the linearized quantization model of \eqref{eq:linear} into \eqref{eq:channel}, the received signal of the $k$th subchannel at the server can be rewritten as
\begin{align}
	{\bf y}_k^t= h_k^t\left({\bf B}_k^t{\bf \bar g}_k^t +{\bf q}_k^t\right)+{\bf n}_k^t.
\end{align}
Therefore, the resultant BLMMSE estimator of ${\bf \bar g}_{k}^t$ given ${\bf y}_k^t$ is
\begin{align}
	\mathbb{\hat E}\left[	{\bf \bar g}_{k}^t | {\bf y}_k^t\right]&=h_k^t {\bm {\Sigma}}_k^t ({\bf B}_k^t)^{\top}\left((h_k^t)^2{\bf B}_k^t {\bm {\Sigma}}_k^t ({\bf B}_k^t)^{\top}+(h_k^t)^2{\bf C}_{{\bf q}_k^t{\bf q}_k^t} +\sigma_k^2{\bf I}\right)^{-1}{\bf y}_k^t \nonumber\\
	&=h_k^t {\bm {\Sigma}}_k^t ({\bf B}_k^t)^{\top}\left((h_k^t)^2{\bm {\tilde \Sigma}}_k^t  +\sigma_k^2{\bf I}\right)^{-1}{\bf y}_k^t \nonumber\\
	&=h_k^t {\bm {\Sigma}}_k^t \sqrt{\frac{2}{\pi}} {\rm diag}\left({\bm {\Sigma}}_k^t\right)^{-\frac{1}{2}}\left((h_k^t)^2\frac{2}{\pi}\left[{\rm arcsin}\left( {\rm diag}({\bm {\Sigma}}_k^t)^{-\frac{1}{2}}{\bm {\Sigma}}_k^t  {\rm diag}({\bm {\Sigma}}_k^t) ^{-\frac{1}{2}}\right)\right] +\sigma_k^2{\bf I}\right)^{-1}{\bf y}_k^t. \label{eq:BLMMSE}
\end{align}
To compute $\mathbb{\hat E}\left[	{\bf \bar g}_{k}^t | {\bf y}_k^t\right]$, the server requires knowledge of ${\bm {\Sigma}}_k^t\in \mathbb{R}^{M\times M}$ for $k\in [K]$, which increases uplink communication cost significantly.  The following proposition provides a scalable expression of the BLMMSE aggregation function under the iid Gaussian prior assumption ${\bf \bar g}_k^t\sim \mathcal{N}\left({\bf 0},{\nu}_k^t{\bf I}_M\right)$.

    
        {\bf Proposition 2:} Let ${g}_{k,m}^t$ be iid Gaussian random variables with ${\hat \mu}_k^t$ and ${\hat \nu_k}^t$ for $k\in [K]$ and $m\in [M]$. Also, let $U({\bf y}_1^t,\ldots, {\bf y}_K^t):\mathbb{R}^{MK}\rightarrow\mathbb{R}^M$ be a {\it linear} aggregation function. Then, the optimal {\it linear} function that minimizes $ \mathbb{E}\left[\left\|{\bf g}_{\Sigma}^t-U_({\bf y}_1^t,\ldots, {\bf y}_K^t)\right\|_2^2\right]$ is given by
    \begin{align}
        U^{\star}_{\sf BLMMSE}({\bf y}_1^t,\ldots, {\bf y}_K^t)= \sum_{k=1}^K\left\{ {\hat \mu}_k^t{\bf 1}_M+    \frac{\sqrt{\frac{2}{\pi}}h_k^t{\hat \nu}_k^t}{ \frac{2}{\pi}(h_k^t)^2  +\sigma_k^2 }{\bf y}_k^t\right\}.\label{eq:blmmse2}
    \end{align}

 \proof
Plugging ${\bf B}_k^t=\sqrt{\frac{2}{\pi}} \frac{1}{\nu_k^t}{\bf I}$, ${\bm {\tilde \Sigma}}_k^t={\bf I}$, 	and ${\bf C}_{{\bf q}_k^t{\bf q}_k^t}=\left(1-\frac{2}{\pi}\right){\bf I}$ into \eqref{eq:BLMMSE}, we obtain \eqref{eq:blmmse2}, which completes the proof.      \endproof      	

This proposition shows that the BLMMSE aggregation function is also scalable to the number of mobile devices and the model size because it requires element-wise operation. However, this aggregation function is worse than the MMSE aggregation function in terms of the MSE performance because it minimizes the MSE under a linear map constraint.

    \subsubsection{Laplacian prior}
    When a local gradient vector is sparse, the Laplacian prior distribution can capture the gradient's sparsity structure better than the Gaussian prior  \cite{glorot2010understanding,shi2019understanding}. The following proposition shows how to change the Bayesian aggregation function for the Laplacian prior. 
    
    \vspace{0.2cm}
    {\bf Proposition 3:} Let ${\bar g}_{k,m}^t$ be  iid Laplacian random variables with zero-mean and scale parameter $\lambda_k^t$:
    \begin{align}
        P\left( {\bf \bar g}_{k}^t\right)=\prod_{m=1}^M\frac{1}{2\lambda_k^t}\exp\left(-\frac{|{\bar g}_{k,m}^t|}{\lambda_k^t}\right). 
        \label{eq:Laplacian}
    	\end{align}
Then, the MMSE aggregation function  is
    \begin{align}
     U_{\sf MMSE,Lap}^{\star}({\bf y}_1^t,\ldots, {\bf y}_K^t)= \sum_{k=1}^K\left\{ \mu_k^t{\bf 1}_M +\lambda_k^t  \tanh\left(\frac{2h_k^t  {\bf y}^t_{k}}{\sigma_k^2}\right)\right\}.
    	\label{eq:mmse_laplace} 
    \end{align}
    \begin{proof}
        The proof is direct from Appendix \ref{proof1} by changing the prior distribution from Gaussian to Laplacian.
    \end{proof}
    This gradient aggregation function is almost identical to the aggregation function assuming the Gaussian prior distribution in \eqref{eq:mmseAgg} except for a scaling parameter.  
    \vspace{0.2cm}

    \subsubsection{High SNR regime}
    The following corollary shows how to simplify the estimator in the high SNR regime. 
    
    \begin{cor}\label{cor1}
    	When $\sigma_k^2\rightarrow 0$, the aggregation functions in \eqref{eq:mmseAgg} and \eqref{eq:blmmse2} simplify to
    \begin{align}
         \lim_{\sigma_k^2\rightarrow 0}   U_{\sf MMSE}^{\star}({\bf y}_1^t,\ldots, {\bf y}_K^t)=  \sum_{k=1}^K\left\{ {\hat \mu}_k^t{\bf 1}_M+ {\hat \nu_k}^t \sqrt{\frac{2}{\pi}}{\sf sign}\left(\frac{{\bf y}_k^t}{h_k^t}\right)\right\}. 
        \label{eq:mmseAgg2} 
        \end{align}
        and 
            \begin{align}
  \lim_{\sigma_k^2\rightarrow 0}   U_{\sf BLMMSE}^{\star}({\bf y}_1^t,\ldots, {\bf y}_K^t)=  \sum_{k=1}^K\left\{ {\hat \mu}_k^t{\bf 1}_M+ {\hat \nu_k}^t \sqrt{\frac{2}{\pi}} \frac{{\bf y}_k^t }{h_k^t} \right\}. 
        \label{eq:blmmseAgg3} 
        \end{align}
    \end{cor}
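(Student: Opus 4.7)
The plan is to take the limit $\sigma_k^2 \to 0$ directly inside the closed-form expressions established in Propositions 1 and 2. Because both aggregation functions are finite sums indexed by $k \in [K]$, and in each summand only one term depends on $\sigma_k^2$, the limit commutes with the summation and one can argue term by term for each $k$.

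For the MMSE aggregator in \eqref{eq:mmseAgg}, the only $\sigma_k^2$-dependent piece is the $\tanh$ argument. Fixing a channel realization with $h_k^t \neq 0$, I would note that coordinate-wise the argument $\frac{2 h_k^t y_{k,m}^t}{\sigma_k^2}$ diverges to $\pm\infty$ as $\sigma_k^2 \downarrow 0$, with the sign of the limit matching that of $h_k^t y_{k,m}^t$. Using $\lim_{x\to\pm\infty}\tanh(x) = \pm 1$, the entrywise limit equals ${\sf sign}(h_k^t y_{k,m}^t)$, which coincides with ${\sf sign}(y_{k,m}^t / h_k^t)$ since the nonzero scalar $(h_k^t)^2$ is positive and hence does not affect the sign. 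Summing over $k$ and collecting the $\sqrt{2/\pi}\,\hat{\nu}_k^t$ factor yields \eqref{eq:mmseAgg2}. The exceptional event $\{y_{k,m}^t = 0\}$ has measure zero under the Gaussian channel model of \eqref{eq:channel} and can be excised without affecting the expression.

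For the BLMMSE aggregator in \eqref{eq:blmmse2}, the scalar coefficient multiplying ${\bf y}_k^t$ is a continuous function of $\sigma_k^2$ at $\sigma_k^2 = 0$ as long as $h_k^t \neq 0$. I would simply substitute $\sigma_k^2 = 0$ in the denominator and simplify the ratio algebraically, yielding a coefficient proportional to $\hat{\nu}_k^t / h_k^t$ and producing \eqref{eq:blmmseAgg3} after collecting constants.

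The only subtlety lies in the MMSE case, where the limit $\tanh \to {\sf sign}$ is discontinuous at the origin; the cleanest way around this is the pointwise excision argument sketched above, and no uniformity is needed because the corollary asserts convergence of the aggregation function itself rather than of any integrated quantity. Once Propositions 1 and 2 are granted, the proof thus reduces to two elementary limit computations, with no expected obstacle beyond the careful handling of this discontinuity.
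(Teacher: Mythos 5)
Your treatment of the MMSE part is correct and is essentially the paper's own (one-line) proof: for $h_k^t\neq 0$ the argument $2h_k^t y_{k,m}^t/\sigma_k^2$ diverges with sign equal to ${\sf sign}(h_k^t y_{k,m}^t)={\sf sign}(y_{k,m}^t/h_k^t)$, so $\tanh\to\pm 1$ entrywise, and the measure-zero event $\{y_{k,m}^t=0\}$ is harmless. That part needs no further comment.

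The BLMMSE part, however, does not go through as you claim. Substituting $\sigma_k^2=0$ into \eqref{eq:blmmse2} gives the coefficient
\begin{align}
\frac{\sqrt{\tfrac{2}{\pi}}\,h_k^t\hat{\nu}_k^t}{\tfrac{2}{\pi}(h_k^t)^2}=\sqrt{\frac{\pi}{2}}\,\frac{\hat{\nu}_k^t}{h_k^t},\nonumber
\end{align}
which differs from the coefficient $\sqrt{\tfrac{2}{\pi}}\,\hat{\nu}_k^t/h_k^t$ asserted in \eqref{eq:blmmseAgg3} by a factor of $\pi/2$. Your phrase ``producing \eqref{eq:blmmseAgg3} after collecting constants'' is exactly where the computation fails; carrying out the algebra would have exposed the mismatch. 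The root cause is an inconsistency between Proposition 2 and the corollary: the constant $\sqrt{2/\pi}$ in \eqref{eq:blmmseAgg3} is what one obtains by taking the limit of the \emph{full} BLMMSE expression in the first line of \eqref{eq:BLMMSE}, whose denominator retains the quantization-noise term $(h_k^t)^2{\bf C}_{{\bf q}_k^t{\bf q}_k^t}=(h_k^t)^2\left(1-\tfrac{2}{\pi}\right){\bf I}$ and therefore equals $(h_k^t)^2+\sigma_k^2$ in the iid scalar case, giving the coefficient $\sqrt{2/\pi}\,h_k^t\hat{\nu}_k^t/\big((h_k^t)^2+\sigma_k^2\big)$; the stated formula \eqref{eq:blmmse2} appears to have dropped that term. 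The $\sqrt{2/\pi}$ version is the one consistent with the rest of the paper: in the noise-free channel the LMMSE coefficient of $\bar{g}_{k,m}^t$ given ${\sf sign}(\bar{g}_{k,m}^t)$ is $\mathbb{E}\big[\bar{g}_{k,m}^t{\sf sign}(\bar{g}_{k,m}^t)\big]=\sqrt{2/\pi}\,\hat{\nu}_k^t$, it coincides with the high-SNR MMSE limit \eqref{eq:mmseAgg2} when $h_k^t=1$ (as the paper remarks), and it reproduces the MSE value $M\left(1-\tfrac{2}{\pi}\right)\sum_k(\nu_k^t)^2$ of Corollary \ref{cor3}. So a correct proof of \eqref{eq:blmmseAgg3} must start from \eqref{eq:BLMMSE} rather than from \eqref{eq:blmmse2} as written; taking the limit of \eqref{eq:blmmse2} verbatim, as you propose, yields a different constant.
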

    
    \begin{proof}
        The proof is direct from the definition of the hyperbolic tangent function.
    \end{proof}
    	   	
    This corollary shows that the proposed aggregation function becomes a weighted sum of the local gradient signs.  Therefore, for the noise-free channel case, one can exploit the standard deviation of the local gradient ${\hat \nu}_k^t$ as the weights of the heterogeneous local gradients because ${\sf sign}\left(h_k^t{\bf y}_k^t\right)={\sf sign}\left(\frac{{\bf y}_k^t}{h_k^t}\right)={\sf sign}\left({\bf \bar g}_k^t\right)$. Furthermore, if ${\hat \mu}_k^t=0$ and ${\hat \nu}_k^t$ are equal across mobile devices, SBFL reduces to the signSGD algorithm in \cite{bernstein2018signsgd}. Consequently, SBFL generalizes the signSGD algorithm by incorporating data heterogeneity of mobile devices from a Bayesian viewpoint. In addition, when $h_k^t=1$ and $\sigma_k^2=0$, the BLMMSE aggregation function becomes identical to the MMSE aggregation function because ${\bf y}_k^t={\sf sign}({\bf \bar g}_k^t)$.

    \vspace{0.2cm}

    \vspace{0.2cm}
    \subsubsection{Extension with downlink compression}\label{algorithm_DC}
    We have assumed that the server can send the real-valued model parameters ${\bf w}^t$ to the mobile devices perfectly via the downlink communication channels. When the model size is extremely large, sending the real-valued ${\bf w}^t$ increases the downlink communication cost. Thereby, the server requires compressing the real-valued model parameter to reduce the downlink communication cost. To accomplish this, we can modify the proposed SBFL. Instead of performing the model update at the server, we can compress the aggregated local gradients to a binary vector, and send the compressed sign information to mobile devices via downlink channels. Using the received binary information, each mobile device independently updates the model parameters using the same learning rate. This algorithm modification can significantly reduce the downlink communication cost because it sends a binary vector using the downlink channel per communication round.  We summarize this in Algorithm \ref{alg:SBFL_DC}. 
 
  
    
    \begin{algorithm}[t]
      \caption{SBFL with Downlink Compression}
      \label{alg:SBFL_DC}
    \begin{algorithmic}[1]
      \REQUIRE Learning rate $\gamma$, momentum $\delta$, $K$ mobile devices
      \STATE {\bfseries Initialize:} ${\bf w}_k^{-1} = {\bf w}^{-1}$, ${\bf m}^{-1}_k={\bf 0}$, and $ {\bf y}_k^{-1}={\bf 0}~\forall k \in [K]$;
      \FOR{$t = 0,\dots,T-1$}
      \STATE ~~{\bfseries on each device} $k$
      \STATE ~~~~~~${\bf m}^{t}_k = \delta {\bf m}^{t-1}_k + {\sf sign}\left( \sum_{k=1}^K\left\{ {\hat \mu}_k^t{\bf 1}_M+ \sqrt{\frac{2}{\pi}} {\hat \nu}_k^t\tanh\left(\frac{2h_k^t {\bf y}^{t-1}_{k}}{\sigma_k^2}\right) \right\}\right)$
      \STATE ~~~~~~${\bf w}^{t}_k = {\bf w}^{t-1}_k - \gamma {\bf m}^{t}_k$
      \STATE ~~~~~~{\bfseries compute }${\bf g}_k^{t} = \nabla f_k\left({\bf w}_k^{t}\right)$, $\mu_k^{t}$, and $\nu_k^{t}$
      \STATE ~~~~~~{\bfseries send }${\sf sign}\left({\bf g}_k^{t}\right)$, ${\hat \mu}_k^{t}$, and ${\hat \nu}_k^{t}$
      \STATE ~~{\bfseries on server}
      \STATE ~~~~~~{\bfseries receive} ${\bf y}_k^t$, ${\hat \mu}_k^{t}$, and ${\hat \nu}_k^{t}$ for $k\in [K]$
      \STATE ~~~~~~{\bfseries send }${\sf sign}\left( \sum_{k=1}^K\left\{ {\hat \mu}_k^t{\bf 1}_M+ \sqrt{\frac{2}{\pi}} {\hat \nu}_k^t\tanh\left(\frac{2h_k^t {\bf y}^t_{k}}{\sigma_k^2}\right) \right\}\right)$
      \ENDFOR
    \end{algorithmic}
    \end{algorithm}

    \section{Performance Analysis}
    This section provides a convergence analysis of SBFL for a class of convex loss functions. The critical step in the analysis is to derive that MSE values according to different gradient aggregation functions. Then, leveraging the derived MSE values, we show that the gradient descent algorithm employing the proposed aggregation method guarantees convergence to a stationary point for a class of non-convex loss functions. 
    
    \subsection{MSE Bounds}
    We characterize the optimal MSE under the premise that all local gradients' prior distributions have zero-means, i.e., $\mu_k^t=0$ for $k\in [K]$ and $t\in [T]$.
    
    \begin{thm}\label{thm1}  Let ${\bar g}_{k,m}^t$ be iid Gaussian with zero-mean and $\nu_k^t$ for $k\in [K]$ and $m\in [M]$. Then, the minimum MSE, $\eta_{\sf mse}^t \triangleq  \mathbb{E}\left[\left\|{\bf \bar g}_{\Sigma}^t- U_{\sf MMSE}^{\star}({\bf y}_1^t,\ldots, {\bf y}_K^t)\right\|_2^2\right]$, is 
    \begin{align}
    	 \eta_{\sf mse}^t=M\sum_{k=1}^K (\nu_k^t)^2\left[1-\frac{2}{\pi} \int_{-\infty}^{\infty}  \tanh\left(\frac{2h_k^ty_k^t}{\sigma_k^2}\right)^2P({y}_{k,m}^t) {\rm d}y_{k,m}^t\right],\label{eq:mse}
    \end{align}
    where 
    \begin{align}
    	P({y}_{k,m}^t)=\frac{\exp\left(-\frac{|y_{k,m}^t-h_k^t|^2}{2\sigma_k^2}\right)+\exp\left(-\frac{|y_{k,m}^t+h_k^t|^2}{2\sigma_k^2}\right)}{2\sqrt{2\pi}\sigma_k}.\label{eq:py}
    \end{align}
    \end{thm}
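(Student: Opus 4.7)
My plan is to invoke the orthogonality principle of MMSE estimation in tandem with the independence structure induced by the zero-mean iid Gaussian prior and the orthogonal subchannel model in \eqref{eq:channel}. Under $\mu_k^t=\hat{\mu}_k^t=0$, Proposition 1 identifies $U^\star_{\sf MMSE}$ in \eqref{eq:mmseAgg} as the conditional mean $\mathbb{E}[\bar{\bf g}_\Sigma^t\mid{\bf y}_1^t,\ldots,{\bf y}_K^t]$, so the estimation error is orthogonal to the estimator, giving
\[
\eta_{\sf mse}^t \;=\; \mathbb{E}\bigl[\|\bar{\bf g}_\Sigma^t\|_2^2\bigr] \;-\; \mathbb{E}\bigl[\|U^\star_{\sf MMSE}\|_2^2\bigr].
\]
The theorem then reduces to computing these two second moments and identifying the marginal density of $y_{k,m}^t$ entering \eqref{eq:py}.

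For the first moment, since $\bar{g}_{k,m}^t\sim\mathcal{N}(0,(\nu_k^t)^2)$ is iid across $m$ and independent across $k$ under the SBFL model, summing variances coordinatewise gives $\mathbb{E}\|\bar{\bf g}_\Sigma^t\|_2^2=M\sum_{k=1}^K(\nu_k^t)^2$. For the second moment, I would exploit that the orthogonal subchannels, independent noises $\{{\bf n}_k^t\}$, and independent per-device priors make the observation vectors $\{{\bf y}_k^t\}_{k=1}^K$ mutually independent. Hence the summands $\sqrt{2/\pi}\,\hat{\nu}_k^t\tanh(2h_k^t{\bf y}_k^t/\sigma_k^2)$ in \eqref{eq:mmseAgg} are independent across $k$, and the symmetry of the zero-mean Gaussian prior makes the law of $y_{k,m}^t$ symmetric about zero, so each tanh summand has zero mean (tanh being odd). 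All cross-device cross terms therefore vanish in expectation and only the diagonal contributions remain, yielding
\[
\mathbb{E}\bigl[\|U^\star_{\sf MMSE}\|_2^2\bigr]=M\sum_{k=1}^K\frac{2}{\pi}(\nu_k^t)^2\int_{-\infty}^{\infty}\tanh^2\!\left(\frac{2h_k^ty}{\sigma_k^2}\right)P(y_{k,m}^t)\,dy.
\]
Substituting both moments into the orthogonality identity gives exactly \eqref{eq:mse}. To obtain $P(y_{k,m}^t)$ in \eqref{eq:py}, I would condition on $s_{k,m}^t\triangleq{\sf sign}(\bar{g}_{k,m}^t)\in\{\pm1\}$, which is equiprobable under the zero-mean symmetric Gaussian prior; the conditional law $y_{k,m}^t\mid s_{k,m}^t$ is $\mathcal{N}(h_k^ts_{k,m}^t,\sigma_k^2)$ by \eqref{eq:channel}, and marginalizing over $s_{k,m}^t$ produces the equal-weight Gaussian mixture in \eqref{eq:py}.

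The main obstacle is the cross-term analysis for $\mathbb{E}\|U^\star_{\sf MMSE}\|_2^2$: one must simultaneously invoke (i) independence of the observation vectors across subchannels, which follows from the orthogonal access model together with the per-device independent priors assumed by SBFL, and (ii) the odd-symmetry argument forcing $\mathbb{E}[\tanh(2h_k^ty_{k,m}^t/\sigma_k^2)]=0$ under a symmetric zero-mean prior. Once these two ingredients are in place, the rest is a direct variance computation; the residual integral in \eqref{eq:mse} cannot in general be reduced to a closed form but is a well-defined scalar for every finite $(h_k^t,\sigma_k^2)$ and depends only on the effective SNR of subchannel $k$.
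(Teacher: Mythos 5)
Your proposal is correct and takes essentially the same route as the paper's proof: both rest on the identity $\eta_{\sf mse}^t = \text{(prior variance)} - \text{(second moment of the conditional mean)}$, both show the $\tanh$ estimator has zero mean by odd symmetry, and both reduce the remaining term to the integral of $\tanh^2$ against the two-component Gaussian mixture density in \eqref{eq:py}. The only difference is presentational: the paper applies the law of total variance scalar-by-scalar (per coordinate, per device) and sums the per-device MSEs implicitly, whereas you work directly on the sum vector via the orthogonality principle and justify the vanishing cross-device terms explicitly --- a detail the paper glosses over but which your argument correctly supplies from the independence of the subchannels and priors.
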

    \begin{proof}
     	See Appendix \ref{proof2}.
    \end{proof}
Theorem \ref{thm1} shows that the minimum MSE value is proportional to $({\nu}_{k,}^t)^2$ and $M$, and it decreases by a factor of $\left[1-\frac{2}{\pi} \int_{-\infty}^{\infty}  \tanh\left(\frac{2h_k^ty_k^t}{\sigma_k^2}\right)^2P({y}_{k,m}^t) {\rm d}y_{k,m}^t\right]$ because  $\int_{-\infty}^{\infty}  \tanh\left(\frac{2h_k^ty_k^t}{\sigma_k^2}\right)^2P({y}_{k,m}^t) {\rm d}y_{k,m}^t<1$.  The following corollary establishes a closed-form expression for an upper bound on the minimum MSE derived in Theorem \ref{thm1}.

         \begin{cor} \label{cor2} When using $\lim_{\sigma_k^2\rightarrow 0}U_{\sf MMSE}^{\star}({\bf y}_1^t,\ldots, {\bf y}_K^t)=\sum_{k=1}^K{\hat \nu_k}^t \sqrt{\frac{2}{\pi}}{\sf sign}\left(\frac{{\bf y}_k^t}{h_k^t}\right)$ in \eqref{eq:mmseAgg2}, the minimum MSE becomes
\begin{align}
\mathbb{E}\left[\left\|{\bf \bar g}_{\Sigma}^t- \lim_{\sigma_k^2\rightarrow 0}U_{\sf MMSE}^{\star}({\bf y}_1^t,\ldots, {\bf y}_K^t)\right\|_2^2\right]= M\left(1-\frac{2}{\pi} \right)\left(\sum_{k=1}^K (\nu_k^t)^2\right).
\end{align}
         \end{cor}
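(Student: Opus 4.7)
The plan is to obtain Corollary \ref{cor2} by taking the $\sigma_k^2 \to 0$ limit inside the exact MMSE expression of Theorem \ref{thm1}. From Corollary \ref{cor1}, in this limit the MMSE aggregator $U_{\sf MMSE}^{\star}$ coincides (pointwise) with the sign-based aggregator appearing in \eqref{eq:mmseAgg2}, so its MSE is exactly $\lim_{\sigma_k^2 \to 0}\eta_{\sf mse}^t$.

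The one analytic step is to show that
\[
\int_{-\infty}^{\infty} \tanh\!\left(\frac{2 h_k^t y_{k,m}^t}{\sigma_k^2}\right)^{\!2} P(y_{k,m}^t)\, d y_{k,m}^t \;\longrightarrow\; 1 \quad\text{as } \sigma_k^2 \to 0.
\]
This follows by dominated convergence: the argument of $\tanh$ diverges to $\pm\infty$ at every $y_{k,m}^t \neq 0$, so $\tanh(\cdot)^2 \to 1$ pointwise almost everywhere under the Gaussian-mixture density $P(y_{k,m}^t)$ in \eqref{eq:py}, while $|\tanh^2|\le 1$ furnishes a uniform integrable dominator. Plugging the limiting integral into the bracket of Theorem \ref{thm1} collapses it to $1 - 2/\pi$, and the stated expression $M(1 - 2/\pi)\sum_{k=1}^K (\nu_k^t)^2$ follows immediately.

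As a sanity check, the same answer can be obtained by a direct computation in the noiseless regime. There ${\sf sign}({\bf y}_k^t / h_k^t) = {\sf sign}({\bf \bar g}_k^t)$, so the aggregator reduces to $\sum_k \nu_k^t \sqrt{2/\pi}\,{\sf sign}({\bf \bar g}_k^t)$. By independence across both devices $k$ and coordinates $m$, the squared-error expectation splits into scalar terms $\mathbb{E}[(\bar g_{k,m}^t - \nu_k^t \sqrt{2/\pi}\,{\sf sign}(\bar g_{k,m}^t))^2]$; expanding and applying the elementary Gaussian identities $\mathbb{E}[(\bar g_{k,m}^t)^2] = (\nu_k^t)^2$ and $\mathbb{E}[|\bar g_{k,m}^t|] = \nu_k^t \sqrt{2/\pi}$ reduces each to $(\nu_k^t)^2(1 - 2/\pi)$, and summing over $k \in [K]$ and $m \in [M]$ yields the same total. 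The only step requiring any care is the dominated-convergence interchange in the first approach, and it is essentially automatic given the uniform bound on $\tanh^2$; I do not expect a substantive obstacle.
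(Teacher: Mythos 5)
Your proof is correct and follows essentially the same route as the paper: the paper's one-line proof simply substitutes the sign-based aggregator into the computation of Appendix~\ref{proof2}, so that $\tanh^2(\cdot)$ is replaced by ${\sf sign}^2(\cdot)=1$ and the bracket in Theorem~\ref{thm1} collapses to $1-\frac{2}{\pi}$ --- exactly what your dominated-convergence limit (and your direct Gaussian sanity check via $\mathbb{E}\left[|\bar g_{k,m}^t|\right]=\nu_k^t\sqrt{2/\pi}$) establishes. If anything, your write-up is slightly more careful than the paper's, since it explicitly justifies the interchange of limit and expectation that the paper leaves implicit.
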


\begin{proof}
The proof is direct by replacing $U_{\sf MMSE}^{\star}({\bf y}_1^t,\ldots, {\bf y}_K^t)$ to $ \lim_{\sigma_k^2\rightarrow 0}U_{\sf MMSE}^{\star}({\bf y}_1^t,\ldots, {\bf y}_K^t)$ in Appendix \ref{proof2}. 
\end{proof}

We also derive the minimum MSE when using the BLMMSE aggregation functions.

         \begin{cor} \label{cor3} When using $U_{\sf BLMMSE}^{\star}({\bf y}_1^t,\ldots, {\bf y}_K^t)$ and $\lim_{\sigma_k^2\rightarrow 0}U_{\sf BLMMSE}^{\star}({\bf y}_1^t,\ldots, {\bf y}_K^t)$, the minimum MSE values are given by \begin{align} 
\mathbb{E}\left[\left\|{\bf \bar g}_{\Sigma}^t- U_{\sf BLMMSE}^{\star}({\bf y}_1^t,\ldots, {\bf y}_K^t)\right\|_2^2\right]=M\sum_{k=1}^K (\nu_k^t)^2\left[1- \frac{ \frac{2}{\pi} (h_k^t)^2}{  \frac{2}{\pi} (h_k^t)^2+\sigma_k^2}\right]
\end{align}
and 
\begin{align} 
\mathbb{E}\left[\left\|{\bf \bar g}_{\Sigma}^t- \lim_{\sigma_k^2\rightarrow 0}U_{\sf BLMMSE}^{\star}({\bf y}_1^t,\ldots, {\bf y}_K^t)\right\|_2^2\right]=M\left(1-\frac{2}{\pi} \right)\left(\sum_{k=1}^K (\nu_k^t)^2\right).
\end{align}
         \end{cor}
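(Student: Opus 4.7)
The plan is to reduce each MSE evaluation to a per-coordinate scalar calculation by exploiting two independence structures: (i) the local gradients $\bar{\bf g}_k^t$ are independent across devices $k$, so the total squared error decomposes additively over users; and (ii) the iid Gaussian prior together with the coordinate-wise channel make the problem separable across coordinates $m$. Thus each of the two claims collapses to a one-user, one-coordinate evaluation that is then summed over $k$ and multiplied by $M$.

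For the general BLMMSE formula, under the zero-mean assumption the aggregator from Proposition 2 reduces to $U^{\star}_{\sf BLMMSE}=\sum_{k=1}^{K} a_k {\bf y}_k^t$ with $a_k=\sqrt{2/\pi}\,h_k^t \nu_k^t / [(2/\pi)(h_k^t)^2+\sigma_k^2]$. I would apply the classical scalar LMMSE-error identity $\mathrm{Var}(\bar g) - \mathrm{Cov}(\bar g,y)^2/\mathrm{Var}(y)$ on the Bussgang-linearized channel ${\bf y}_k^t = h_k^t {\bf B}_k^t \bar{\bf g}_k^t + {\bf n}_k^t$ with ${\bf B}_k^t = \sqrt{2/\pi}\,{\bf I}/\nu_k^t$ that underlies Proposition 2. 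The three moments needed are $\mathrm{Var}(\bar g_{k,m}^t)=(\nu_k^t)^2$; the Gaussian-sign correlation $\mathbb{E}[\bar g\,{\sf sign}(\bar g)]=\sqrt{2/\pi}\,\nu$, which yields $\mathrm{Cov}(\bar g_{k,m}^t, y_{k,m}^t) = h_k^t \sqrt{2/\pi}\,\nu_k^t$; and $\mathrm{Var}(y_{k,m}^t) = (2/\pi)(h_k^t)^2 + \sigma_k^2$. Plugging these in immediately produces $(\nu_k^t)^2\bigl[1 - (2/\pi)(h_k^t)^2 / ((2/\pi)(h_k^t)^2+\sigma_k^2)\bigr]$ per coordinate; summing over $m$ and $k$ gives the first identity.

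For the high-SNR limit, I would use the simplified estimator from Corollary \ref{cor1}: since ${\bf y}_k^t/h_k^t \to {\sf sign}(\bar{\bf g}_k^t)$ as $\sigma_k^2 \to 0$, the aggregator collapses to $\sum_k \nu_k^t \sqrt{2/\pi}\,{\sf sign}(\bar{\bf g}_k^t)$. Expanding $\mathbb{E}\bigl[(\bar g_{k,m}^t - \nu_k^t \sqrt{2/\pi}\,{\sf sign}(\bar g_{k,m}^t))^2\bigr]$ yields three terms: the prior variance $(\nu_k^t)^2$, a cross term $(4/\pi)(\nu_k^t)^2$ obtained from the same Gaussian-sign identity applied once more, and a constant quadratic $(2/\pi)(\nu_k^t)^2$. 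These combine to $(1-2/\pi)(\nu_k^t)^2$ per coordinate; summing over $k$ and $M$ coordinates delivers the second identity.

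The subtle point---and the reason the two statements are proved separately rather than as a single continuity argument---is that the general MSE is evaluated on the Bussgang-linearized channel consistent with the derivation of $U^{\star}_{\sf BLMMSE}$, whereas the high-SNR MSE is evaluated on the physical sign channel (where ${\bf y}_k^t/h_k^t$ becomes exactly ${\sf sign}(\bar{\bf g}_k^t)$). Naively sending $\sigma_k^2 \to 0$ inside the first formula does not reproduce the $(1-2/\pi)$ factor, so I would carry out the two evaluations in parallel under their respective models rather than deriving one from the other.
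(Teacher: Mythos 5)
Your derivation of the first identity is correct and is essentially the paper's own computation in Appendix C: the paper writes the per-coordinate error as ${\sf Var}(\bar g_{k,m}^t)-{\sf Var}\left(U^{\star}_{\sf BLMMSE}(y_{k,m}^t)\right)$ and evaluates ${\sf Var}\left(U^{\star}_{\sf BLMMSE}(y_{k,m}^t)\right)=a_k^2\,\mathbb{E}\left[(y_{k,m}^t)^2\right]$ with $\mathbb{E}\left[(y_{k,m}^t)^2\right]=\frac{2}{\pi}(h_k^t)^2+\sigma_k^2$, which coincides term-by-term with your ${\sf Var}(\bar g)-{\sf Cov}(\bar g,y)^2/{\sf Var}(y)$ using the same three moments. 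Two points of comparison are worth recording. First, the paper's appendix proves only the first display; the high-SNR display is stated without derivation, and your direct expansion $\mathbb{E}\left[\left(\bar g-\sqrt{2/\pi}\,\nu\,{\sf sign}(\bar g)\right)^2\right]=\nu^2-\frac{4}{\pi}\nu^2+\frac{2}{\pi}\nu^2=\left(1-\frac{2}{\pi}\right)\nu^2$ is exactly the mechanism used in the paper's proof of Corollary \ref{cor2} for the MMSE analogue, so your completion is the intended one. Second, your closing observation identifies a genuine inconsistency in the paper rather than a mere subtlety: setting $\mathbb{E}[y^2]=\frac{2}{\pi}h^2+\sigma^2$ amounts to evaluating the error on the linearized channel $y=hB\bar g+n$ with the quantization noise $q$ deleted, whereas the paper's own full Bussgang model $y=h(B\bar g+q)+n$ gives $\mathbb{E}[y^2]=h^2+\sigma^2$ (since ${\sf sign}^2=1$); under the full model the LMMSE coefficient is $\sqrt{2/\pi}\,h\nu/(h^2+\sigma^2)$, the per-coordinate MSE is $\nu^2\left[1-\frac{(2/\pi)h^2}{h^2+\sigma^2}\right]$, and the $\sigma_k^2\rightarrow 0$ limit then reproduces $\left(1-\frac{2}{\pi}\right)\nu^2$ continuously, reconciling the two displays that, as you correctly note, cannot both follow from a single model by taking limits. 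Your parallel-evaluation strategy (Bussgang-linearized channel for the first formula, physical sign channel for the second) is therefore the only way to recover both formulas exactly as stated, and it makes explicit a model switch that the paper leaves silent.
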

             \begin{proof}
     	See Appendix \ref{proof3}.
    \end{proof}
          Notice that when using $\lim_{\sigma_k^2\rightarrow 0}U_{\sf BLMMSE}^{\star}({\bf y}_1^t,\ldots, {\bf y}_K^t)$, the MSE becomes identical to that when applying $\lim_{\sigma_k^2\rightarrow 0} U_{\sf MMSE}^{\star}({\bf y}_1^t,\ldots, {\bf y}_K^t)$ as the aggregation function. This implies that, the BLMMSE aggregation function is optimal when the SNRs of all communication links are infinity.

    \subsection{Convergence Analysis} \label{convergence analysis}
 To facilitate the convergence analysis of SBFL, we commence by introducing assumptions that describe some properties of a non-convex loss function. 
    
    {\bf Assumption 1:} For any parameter ${\bf w}$, the loss function is bounded below by some value $f({\bf w}^{\star})$, i.e., $f({\bf w})\geq f({\bf w}^{\star})$ for all ${\bf w}$.
    
        {\bf Assumption 2:} The loss function is $L$-Lipschitz smooth, i.e., for any ${\bf x}$ and ${\bf y}$, 
            \begin{align}
        f({\bf y}) \leq f({\bf x}) + \nabla f({\bf x})^{\top}({\bf y}-{\bf x}) + \frac{L}{2}\|{\bf y}-{\bf x}\|_2^2.
    \end{align}

 Using the above assumptions and Theorem \ref{thm1},  the following theorem shows the convergence rate of SBFL when optimizing a class of non-convex and smooth loss functions.
    
        \begin{thm}\label{thm2}
        	  Let $f:\mathbb{R}^M\rightarrow\mathbb{R}$ be a $L$-Lipschitz smooth and non-convex loss function. SBFL with learning rate $\gamma^t=\frac{\gamma}{\sqrt{t+1}}$ for $\gamma>0$ and momentum $\delta^t=0$ satisfies
        	  \begin{align}
	\mathbb{E}\left[ \frac{1}{T}\sum_{t=0}^{T-1} \|{\bf g}_{\Sigma}^t\|_2^2\right]  \leq \frac{1}{\sqrt{T}}\left[ \frac{f\left({\bf w}^{0}\right) -  f\left({\bf w}^{\star}\right)  
}{\gamma \left(1-\frac{\gamma L}{2}\right) } +\sigma^2_{\sf mse}\left(1+\ln T\right)\frac{ \frac{\gamma^2 L}{2}}{1-\frac{\gamma L}{2}}  \right],\label{eq:convergence}
\end{align}
where  $\sigma^2_{\sf mse}\triangleq \max_{t\in [T]}\{ \eta_{\sf mse}^t\}$. 
        \end{thm}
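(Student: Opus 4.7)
The plan is to carry out a standard non-convex stochastic gradient descent convergence argument, with the minimum MSE expression from Theorem \ref{thm1} playing the role of the usual gradient variance bound. First, applying $L$-Lipschitz smoothness (Assumption 2) to the update ${\bf w}^{t+1} = {\bf w}^t - \gamma^t \hat{\bf g}_{\sf MMSE}^t$ and then taking conditional expectation given ${\bf w}^t$ gives
\begin{align}
\mathbb{E}\!\left[f({\bf w}^{t+1}) \mid {\bf w}^t\right] \leq f({\bf w}^t) - \gamma^t \nabla f({\bf w}^t)^{\top} \mathbb{E}[\hat{\bf g}_{\sf MMSE}^t \mid {\bf w}^t] + \tfrac{L(\gamma^t)^2}{2}\mathbb{E}[\|\hat{\bf g}_{\sf MMSE}^t\|_2^2 \mid {\bf w}^t]. \nonumber
\end{align}
Invoking (i) the unbiasedness of the Bayesian aggregator, $\mathbb{E}[\hat{\bf g}_{\sf MMSE}^t \mid {\bf w}^t] = {\bf g}_{\Sigma}^t = \nabla f({\bf w}^t)$, and (ii) the MSE identity $\mathbb{E}[\|\hat{\bf g}_{\sf MMSE}^t - {\bf g}_{\Sigma}^t\|_2^2 \mid {\bf w}^t] = \eta^t_{\sf mse} \leq \sigma^2_{\sf mse}$ from Theorem \ref{thm1}, this collapses to the per-round descent inequality
\begin{align}
\gamma^t\bigl(1 - \tfrac{L\gamma^t}{2}\bigr)\|{\bf g}_{\Sigma}^t\|_2^2 \leq f({\bf w}^t) - \mathbb{E}[f({\bf w}^{t+1}) \mid {\bf w}^t] + \tfrac{L(\gamma^t)^2}{2}\sigma^2_{\sf mse}. \nonumber
\end{align}

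Next, I would take total expectation, sum from $t = 0$ to $T-1$, and telescope. Assumption 1 bounds the cumulative descent by $f({\bf w}^0) - f({\bf w}^{\star})$. Two schedule-specific facts finish the proof: (a) because $\gamma^t = \gamma/\sqrt{t+1}$ is monotonically decreasing, the coefficient $1 - L\gamma^t/2$ on the left is lower-bounded uniformly by $1 - L\gamma/2$, and (b) the harmonic sum $\sum_{t=0}^{T-1}(\gamma^t)^2 = \gamma^2 \sum_{t=0}^{T-1}\tfrac{1}{t+1} \leq \gamma^2(1 + \ln T)$ produces the logarithmic factor in \eqref{eq:convergence}. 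Finally, the uniform lower bound $\gamma^t \geq \gamma/\sqrt{T}$ over $t \in [T]$ lets me pull a common $\gamma/\sqrt{T}$ factor out of $\sum_t \gamma^t \|{\bf g}_{\Sigma}^t\|_2^2$, and dividing through by $T\gamma(1 - L\gamma/2)/\sqrt{T}$ reproduces exactly the right-hand side of \eqref{eq:convergence}.

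The main technical obstacle is justifying the unbiasedness step $\mathbb{E}[\hat{\bf g}_{\sf MMSE}^t \mid {\bf w}^t] = {\bf g}_{\Sigma}^t$. Interpreted with ${\bf g}_{\Sigma}^t$ deterministic given ${\bf w}^t$, the MMSE estimator is not conditionally unbiased in the frequentist sense; it only satisfies the Bayesian relation $\mathbb{E}[\hat{\bf g}_{\sf MMSE}^t] = \mathbb{E}[{\bf g}_{\Sigma}^t]$ via the tower property under the iid Gaussian prior used in Proposition 1. The cleanest fix, which I would adopt, is to view all expectations in \eqref{eq:convergence} as being over both the channel noise and the random prior draw of the local gradients at each round, so that the inner product $\nabla f({\bf w}^t)^{\top} \hat{\bf g}_{\sf MMSE}^t$ averages to $\|{\bf g}_{\Sigma}^t\|_2^2$ and the bounded-variance step is also legitimate. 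After fixing this Bayesian reading, the telescoping argument and learning-rate bookkeeping proceed without further complications, and the positivity requirement $\gamma < 2/L$ is implicit in the bound being nontrivial.
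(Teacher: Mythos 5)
Your mechanical argument is essentially the paper's proof: the same descent step from Assumption~2, the same identification of the second moment $\mathbb{E}[\|{\bf \hat g}^t\|_2^2] = \|{\bf g}_\Sigma^t\|_2^2 + \eta^t_{\sf mse}$, the same telescoping against $f({\bf w}^0)-f({\bf w}^\star)$, the uniform bound $1-L\gamma^t/2 \geq 1-L\gamma/2$, the harmonic-sum estimate $\sum_{t=0}^{T-1}(t+1)^{-1}\leq 1+\ln T$, and the factor $\gamma^t \geq \gamma/\sqrt{T}$. The paper phrases this via ${\bf \hat g}_\Sigma^t = {\bf g}_\Sigma^t + {\bf e}^t$ with $\mathbb{E}\left[({\bf g}_\Sigma^t)^\top {\bf e}^t \mid {\bf w}^t\right]=0$ and $\mathbb{E}[\|{\bf e}^t\|_2^2]\leq \sigma^2_{\sf mse}$, which is exactly your (i)--(ii). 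So far, identical routes.

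The genuine divergence is your proposed repair of the unbiasedness step, and it does not deliver the theorem. Under your ``Bayesian reading'' (expectation over both the channel noise and the prior draw of the gradients), the MMSE estimator ${\bf \hat g}^t = \mathbb{E}[{\bf g}_\Sigma^t \mid {\bf y}_1^t,\ldots,{\bf y}_K^t]$ satisfies the orthogonality principle with respect to functions of the \emph{observations}: $\mathbb{E}\left[({\bf e}^t)^\top {\bf \hat g}^t\right]=0$. It does \emph{not} satisfy $\mathbb{E}\left[({\bf g}_\Sigma^t)^\top{\bf e}^t\right]=0$; instead
\begin{align}
\mathbb{E}\left[({\bf g}_\Sigma^t)^\top{\bf e}^t\right]=\mathbb{E}\left[({\bf \hat g}^t-{\bf e}^t)^\top{\bf e}^t\right]=-\mathbb{E}\left[\|{\bf e}^t\|_2^2\right],\nonumber
\end{align}
and the tower-property identity $\mathbb{E}[{\bf e}^t]={\bf 0}$ is useless here because ${\bf e}^t$ and ${\bf g}_\Sigma^t$ are correlated. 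Carrying the correct identity through your descent inequality, the inner-product term becomes $-\gamma^t\mathbb{E}[\|{\bf g}_\Sigma^t\|_2^2]+\gamma^t\mathbb{E}[\|{\bf e}^t\|_2^2]$, so the MSE now enters at \emph{first} order in $\gamma^t$, not only through the $(\gamma^t)^2 L/2$ term. Since $\sum_{t=0}^{T-1}\gamma^t \sigma^2_{\sf mse} = O(\gamma\sqrt{T}\,\sigma^2_{\sf mse})$, dividing by $\sqrt{T}\gamma(1-\gamma L/2)$ leaves a non-vanishing additive constant of order $\sigma^2_{\sf mse}$: you prove convergence to a noise-dominated neighborhood, not the $(1+\ln T)/\sqrt{T}$ bound in \eqref{eq:convergence}. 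What the stated bound actually requires is precisely the frequentist-style conditional unbiasedness $\mathbb{E}[{\bf e}^t\mid {\bf w}^t,{\bf g}_\Sigma^t]={\bf 0}$ that you correctly observed fails for the MMSE aggregator (its conditional mean is a shrunken multiple of ${\sf sign}({\bf \bar g}_k^t)$, not ${\bf \bar g}_k^t$). To be fair, the paper does not resolve this either --- it simply asserts $\mathbb{E}[({\bf g}_\Sigma^t)^\top{\bf e}^t\mid{\bf w}^t]=0$ without proof --- but your write-up replaces that assertion with a justification that provably yields a strictly weaker conclusion, so the final ``without further complications'' claim is where your proof breaks.
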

    \begin{proof}
     	See Appendix \ref{proof4}.
    \end{proof}
   Theorem \ref{thm2} implies that the expected value of the squared gradient norm decreases as the number of communications round $T$ grows in the order of 
   \begin{align}
   	\mathcal{O}\left(\frac{c+c'\sigma^2_{\sf mse}\ln(T)}{\sqrt{T}} \right),\label{eq:order}
   \end{align}   
   for some positive constants $c$ and $c'$. When $\sigma^2_{\sf mse}=0$, the convergence rate of the full-batch based gradient decent algorithm reduces to $\mathcal{O}\left(\frac{1}{\sqrt{T}} \right)$. Therefore, larger MSE $\sigma^2_{\sf mse}$ makes the convergence speed slower. By using the MMSE aggregation function that provides the minimum $\sigma^2_{\sf mse}$, we can speed up the convergence rate. Since $\sigma^2_{\sf mse}$ is bounded by a constant, the right-hand side term in \eqref{eq:convergence} approaches zero as $T$ goes infinity because $\lim_{T\rightarrow\infty}\frac{\ln(T)}{\sqrt{T}}=0$. Consequently, the expected value of the squared gradient norm goes to zero  as $T\rightarrow\infty$, implying that SBFL converges to a stationary point of the non-convex loss function with the rate in \eqref{eq:order}.


%

    For ease of exposition, in our proof, we focused on a simple gradient descent algorithm with the proposed Bayesian aggregation method. We can readily extend this proof for SGD using a mini-batch size. The convergence behaviors of SGD using mini-batch gradients in conjunction with the proposed aggregation function will be numerically verified through simulations.  


\section{Simulation Results}
In this section, we evaluate the learning performance of the proposed algorithm and compare it with signSGD \cite{bernstein2018signsgd} to illustrate the synergetic gains of using both the local gradient priors and the channel distributions in a heterogeneous network.  We first explain the simulation settings, including a network model, channel models, and data distributions. We  then provide numerical examples that show the learning performance of the SBFL algorithm in two different scenarios.  We begin with a simple federated learning setting in which a linear classifier is optimized with synthetic yet heterogeneous data sets across users, each with distinct communication link qualities. Then, we consider a more complicated setup, in which a CNN is trained using MNIST datasets allocated to users in a heterogeneous manner. 

   \begin{figure*}[t]
    	\centering
        \includegraphics[width=0.9\linewidth]{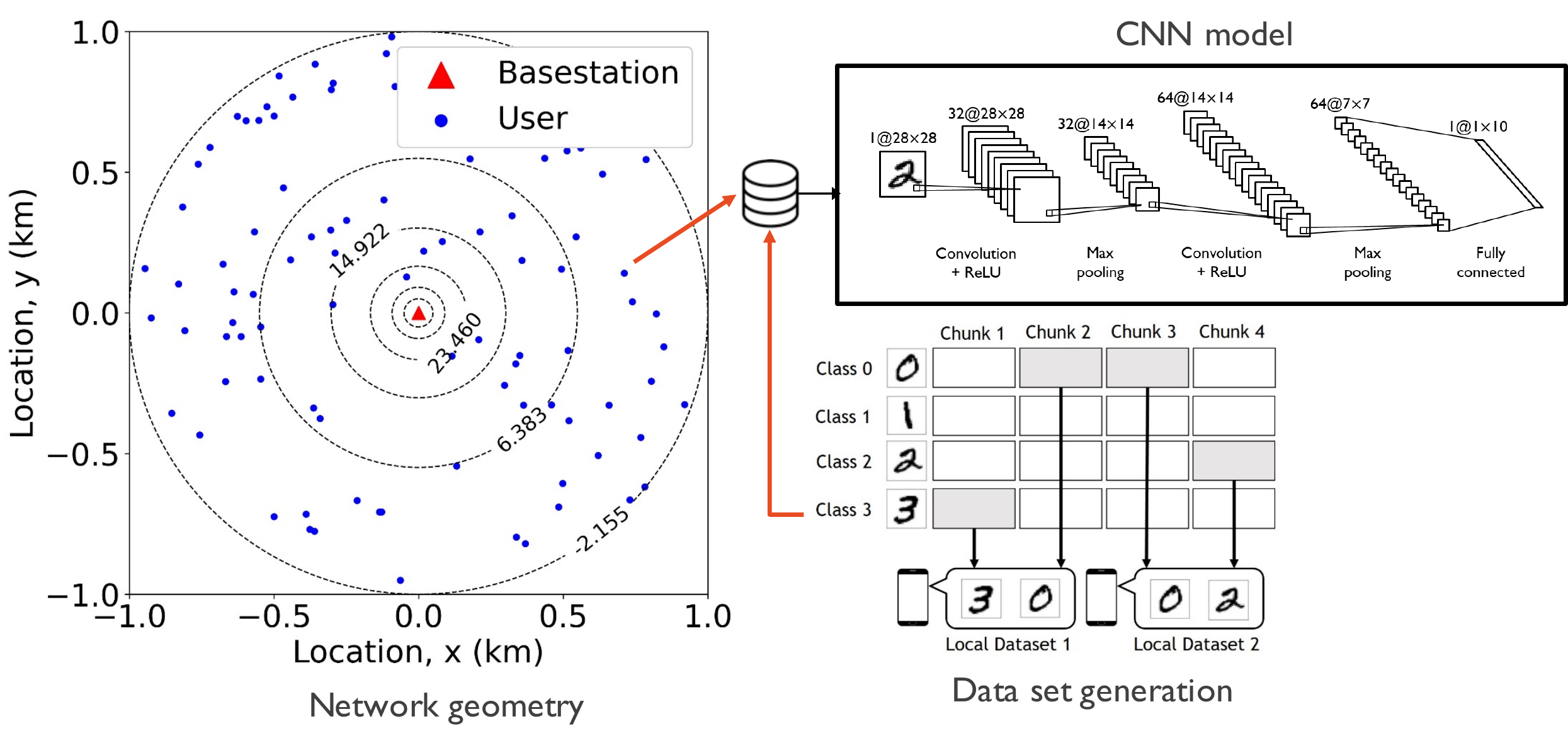}\vspace{-0.3cm}
      \caption{Heterogeneities in the simulation setup. } \label{fig:simulation}\vspace{-0.7cm}
    \end{figure*}
    

    \subsection{Simulation Settings}
    We explain the details of the experimental setup, including learning models, data distributions, and the network model. 
    
    \textbf{Network model:}
    We consider a single-cell scenario in which the locations of users are uniformly distributed in a cell with radius $1$km as illustrated in Fig. \ref{fig:simulation}. The uplink budgets and SNRs are defined according to \cite{mollel2014comparison, sharma2016cell}, in which COST-231 Hata model is used for the path-loss \cite{heath2018foundations}. We also consider an urban environment where a base station and users have heights of $70 $m and $1.5 $m, respectively. Since we focus on the real part of the complex base-band equivalent model, the fading channel coefficients are drawn from iid real Gaussian distribution, i.e., $h_k^t \sim \mathcal{N}(0,1)$. This fading channel is assumed to be constant during a communication round and changes over rounds.

    \textbf{Linear regression with synthetic heterogeneous datasets:}
    We consider a linear regression for a classification task. The local loss function of the linear regression is a simple convex function, i.e., $f_k({\bf w}^t)=\|{\bf X}_k^{\top}{\bf w}^t-{\bf y}_k\|_2^2$ with $[N_k,M,K]=[100,300,20]$. To optimize this linear classifier, we generate  synthetic heterogeneous datasets.  To embrace the data heterogeneity, the data matrix of user $k\in[K]$ is drawn from $\mathcal{N}({\bf 0},a_k{\bf I})$ where the covariance matrix has a scaled identity matrix. Here, the scale parameter is uniformly chosen from (0,5), i.e., $a_k \sim \mathcal{U}(0,5)$. The labels ${\bf y}_k$ are drawn from $\mathcal{N}({\bf 0},{\bf I})$ for $k\in [K]$.  When optimizing the model parameters, we use a step size of  $1/L$, where $L$ is Lipschitz-smoothness of the global loss function.
    
    \textbf{CNN with MNIST datasets:}
    We consider an image classification task using the MNIST dataset by optimizing the CNN model parameters with a cross-entropy loss function. In our simulations, we use a CNN that  consists of two convolutional layers and a linear layer. Each convolutional layer comprises multiple sequential operations, including convolution, ReLu activation, and max-pooling. The kernel sizes of the first and second convolutional layers are set to be $32$ and $64$, respectively. To perform the image classification experiments, we use the MNIST dataset provided by \cite{lecun1998gradient}. To generate heterogeneous datasets across mobile devices, we assume that each class's training samples for the MNIST dataset are divided by the number of users, which is called a chunk of the class. Then, each user uniformly selects two chunks from distinct classes of the MNIST dataset in a non-overlapping manner, as in \cite{mcmahan2017communication} and illustrated in Fig. \ref{fig:simulation}. When applying the SGD algorithm, all users use the identical batch size of $32$ whenever computing gradients.  When optimizing the CNN model parameters, we use the estimated gradient with  momentum because it empirically outperforms the algorithm using gradient only \cite{sattler2019robust}.
    
    To capture the heterogeneity of wireless links, all experiments are repeated $30$ times by changing users' locations and the initial model parameters to observe the robustness of the link heterogeneity. Performance measures are displayed with their standard deviations by the shaded area in the figures.

    \begin{figure*}[t]
    \centering
        \subfigure[]{
        \includegraphics[width=0.222\linewidth]{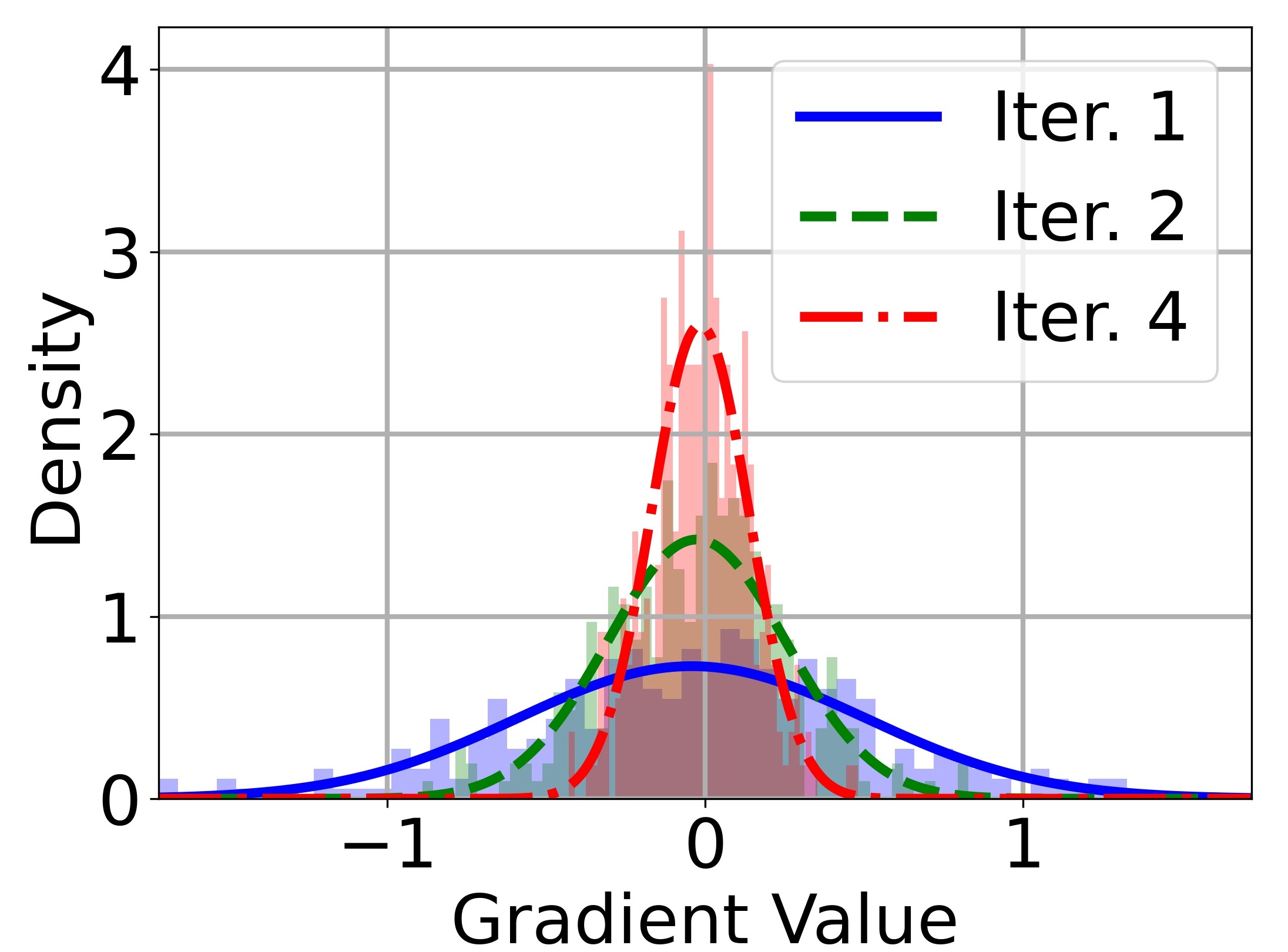}
        \label{fig:hist_syn_iter}
        }
        \subfigure[]{
        \includegraphics[width=0.222\linewidth]{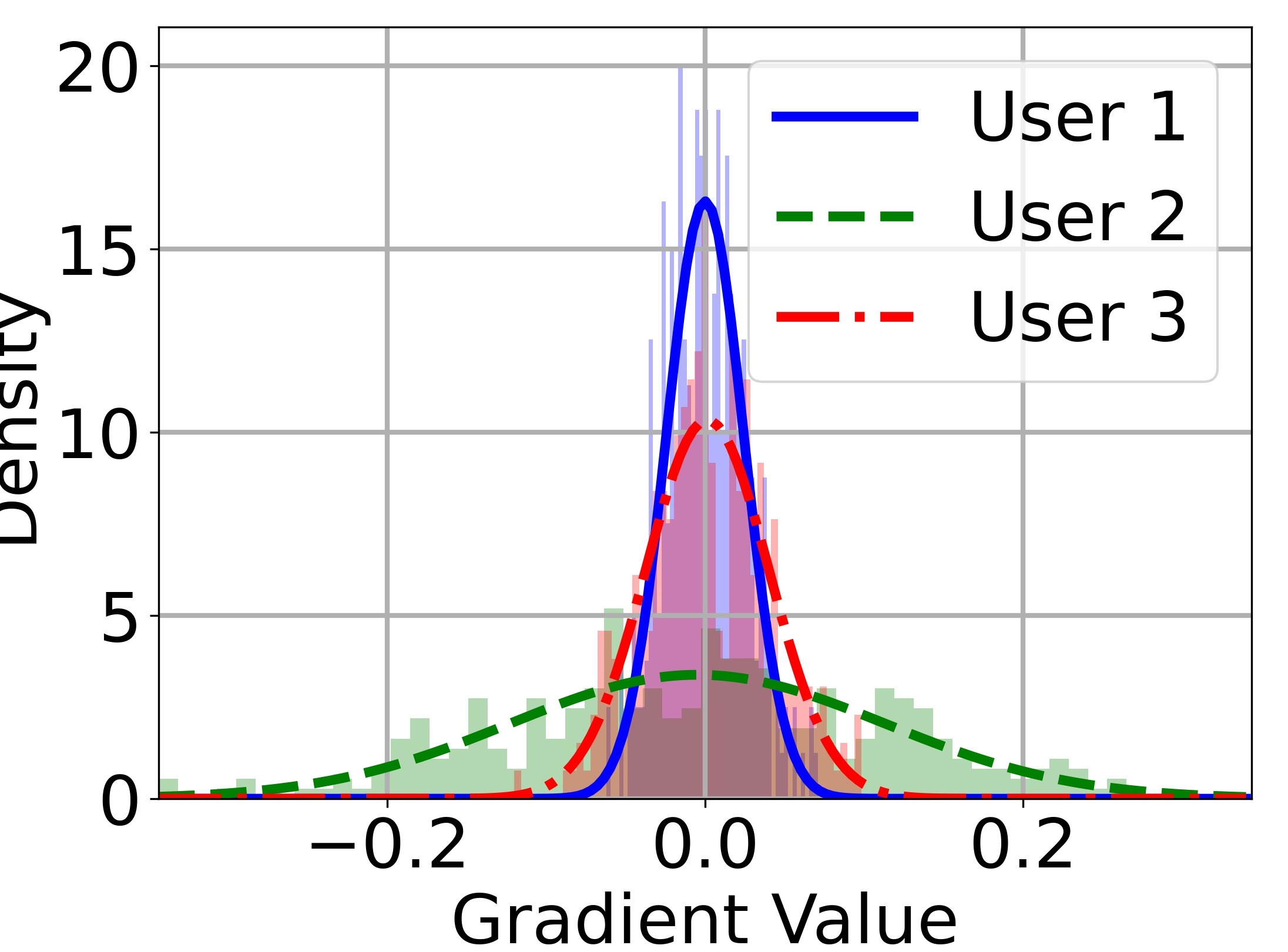}
        \label{fig:hist_syn_user}
        }
        \subfigure[]{
        \includegraphics[width=0.225\linewidth]{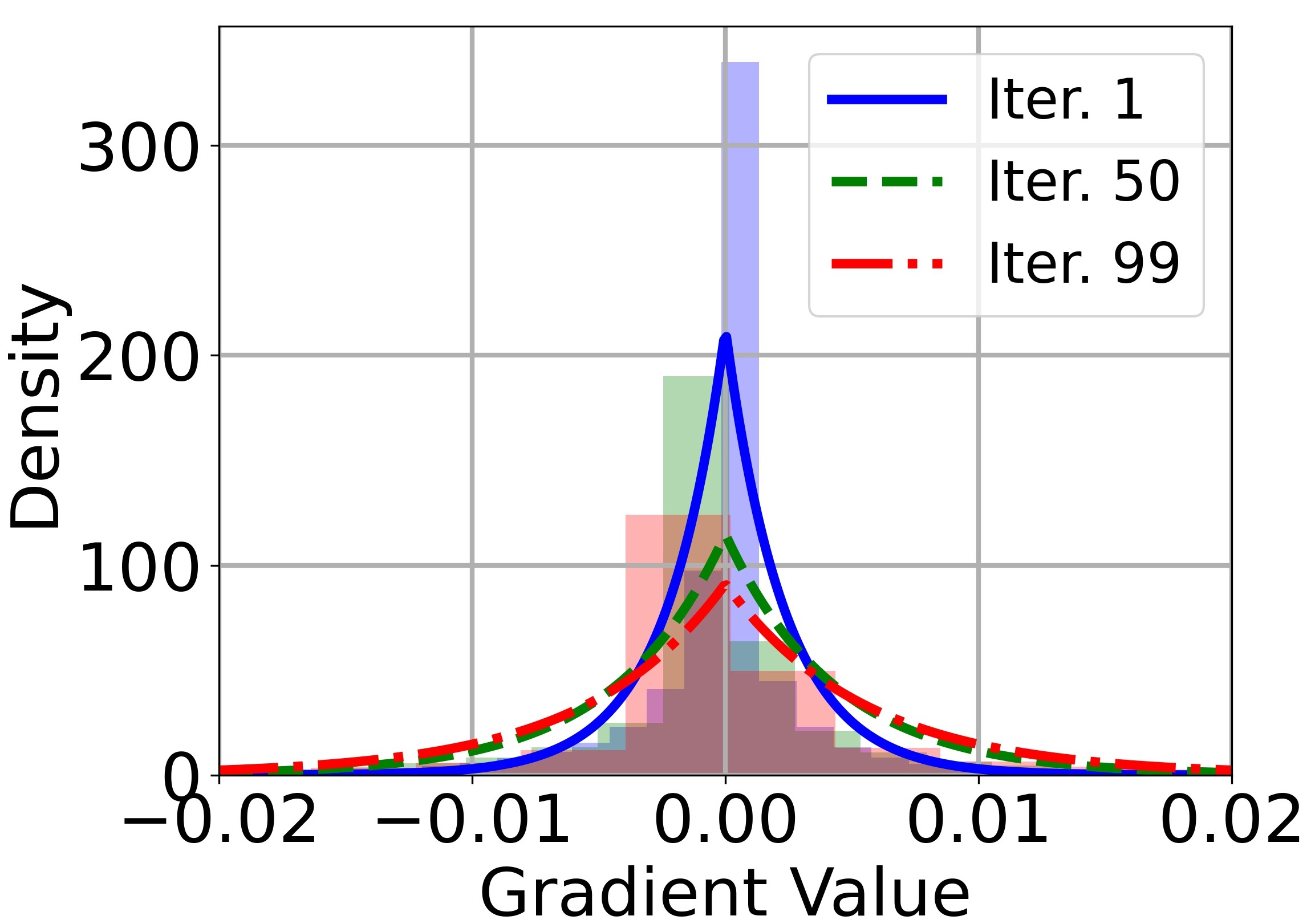}
        \label{fig:hist_mnist_iter}
        }
        \subfigure[]{
        \includegraphics[width=0.225\linewidth]{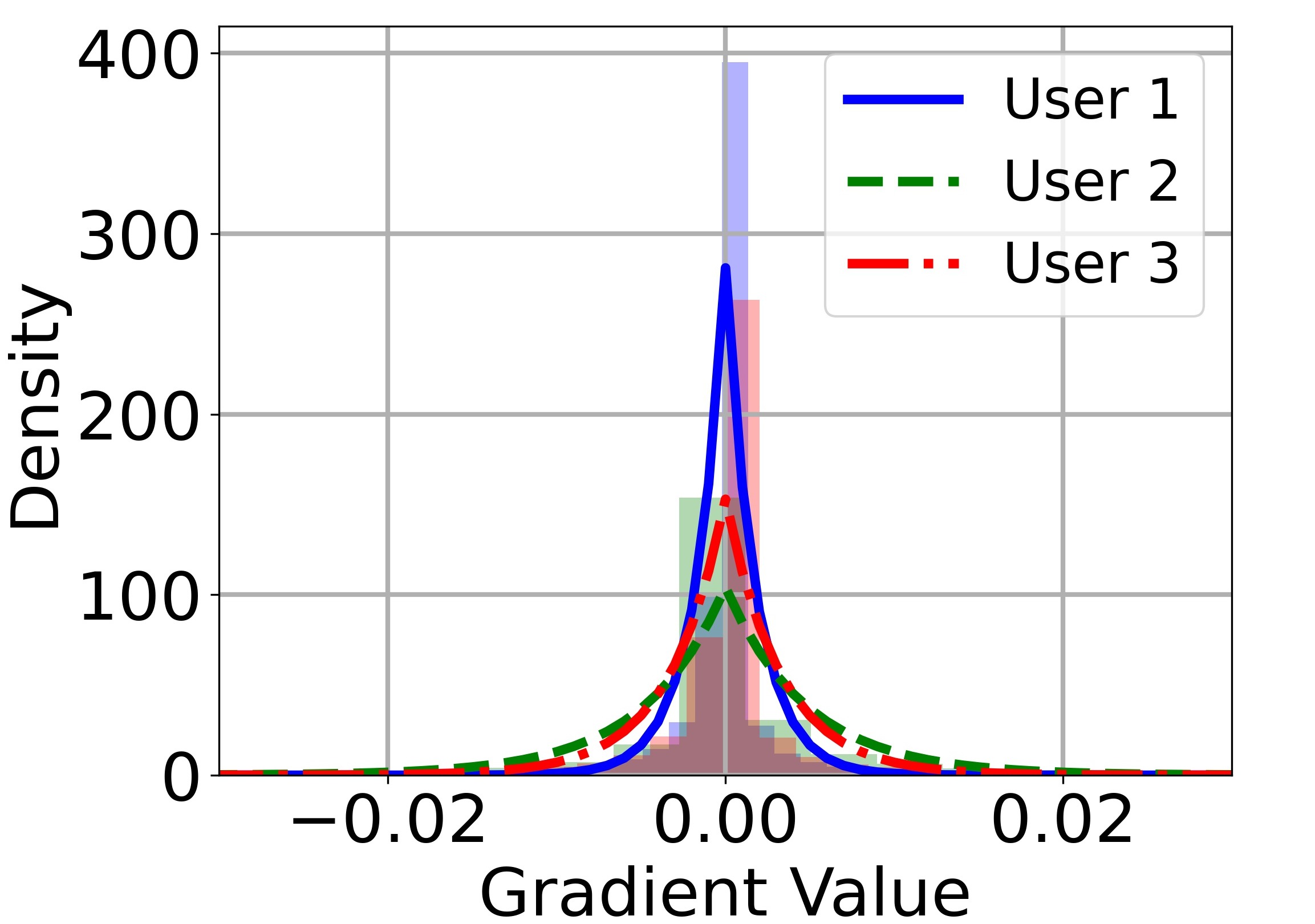}
        \label{fig:hist_mnist_user}
        }    \vspace{-0.2cm}
        \caption{Histograms for the elements of ${\bf g}_k^t$ and the corresponding Gaussian and Laplacian prior approximations with a proper moment matching method. (a) and (b) correspond to the synthetic dataset. (c) and (d) are for the MNIST dataset. }\vspace{-0.5cm}
        \label{fig:histogram}
    \end{figure*}

    \subsection{Empirical Prior Distributions of Local Gradients}\label{distribution}
    The local gradients' prior distribution plays a key role when aggregating the heterogenous local gradient information across mobile devices. In this subsection, we numerically validate the accuracy of the approximation for the Gaussian and Laplacian prior models.   
    Fig. \ref{fig:histogram} shows the histograms for the elements of ${\bf g}_k^t$ and the corresponding Gaussian and Laplacian prior approximations with a proper moment matching method. In particular, we use the MLE estimator for the first and the second moments matching. As can be seen,  since the local data distribution is assumed to be heterogeneous across users, the users' prior distributions are also distinct. Also, one interesting observation is that the sample average of ${\bf g}_k^t$ is asymptotically zero, i.e., zero-mean regardless of the data distributions and the learning models. In contrast, the empirical distributions evolve distinctly over iterations depending on the users and the learning models. For instance, for  linear regression with synthetic heterogeneous datasets, the local gradients' variance dwindle over iterations. However, the local gradients' variance tends to increase when optimizing the CNN using the MNIST dataset. 
     
    \subsection{Learning Performance Comparison}\label{comparison}
    
                   \begin{figure*}[t]
    	\centering
        \includegraphics[width=0.9\linewidth]{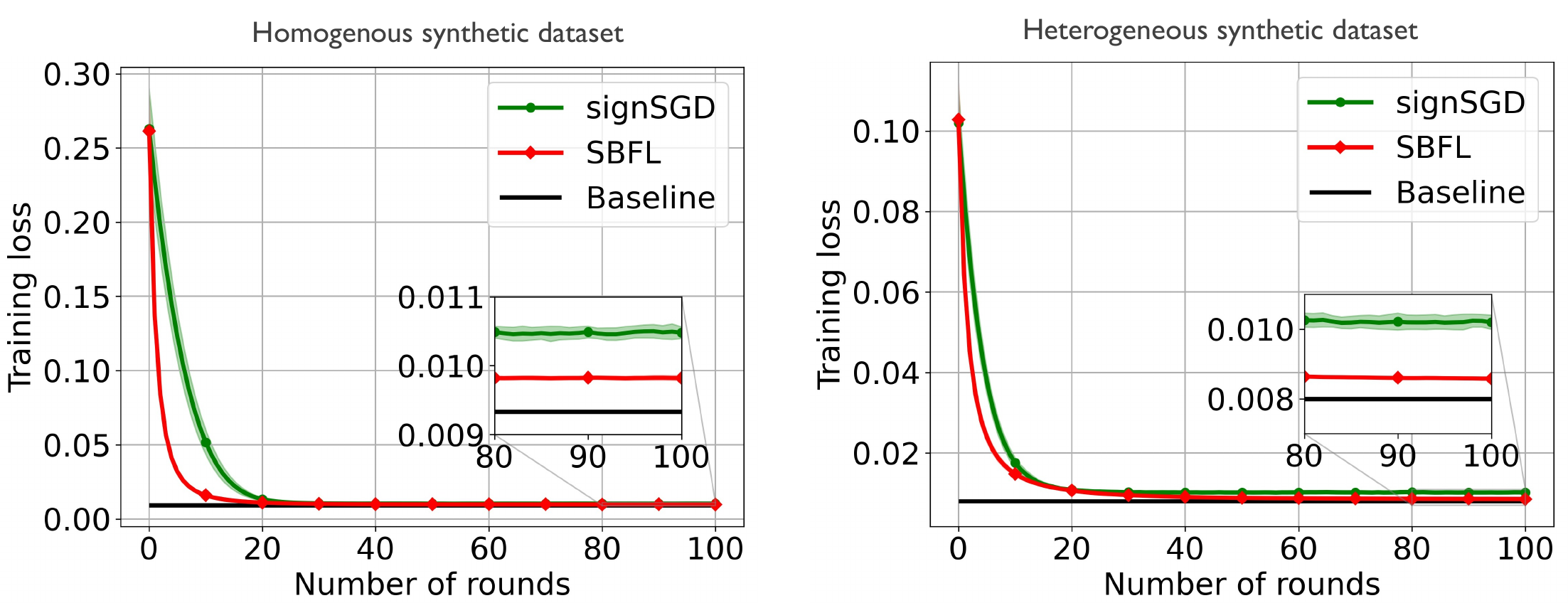}
      \caption{A comparison of training loss for homogeneous and heterogeneous synthetic datasets. Training losses after converged are shown in magnified box in each figure.} \label{fig:loss_syn}\vspace{-0.5cm}
    \end{figure*}
    
    
    \subsubsection{Synthetic dataset}
    Fig. \ref{fig:loss_syn} shows how the training losses diminishes over communication rounds when training the linear classifier using two synthetic datasets. For the homogeneous datasets with $a_k=5$ for $k\in [K]$, the proposed SBFL achieves training loss performance of 0.009, which is a 54$\%$ reduction with respect to that attained by the signSGD algorithm. This training loss gain is further magnified when considering the heterogenous dataset as seen in Fig. \ref{fig:loss_syn}. This gain stems from the joint exploitation of information about the distinct prior and channel distributions, which allows capturing the property of the data and channel link heterogeneities simultaneously. Another interesting observation is that the convergence speed also becomes faster by this synergetic gain. 
    
    \begin{table}[t]
    \centering
        \begin{threeparttable}
        \caption{ Learning efficiency comparison with signSGD for different hyper-parameters \label{Table:rounds}}
        \begin{tabular*}{\columnwidth}{@{\extracolsep{\fill}}cccccccc}
        \toprule 
        \multicolumn{2}{c}{Hyperparam.} & \multicolumn{6}{c}{Number of Rounds\tnote{*}} \\
        \cmidrule(){1-2} \cmidrule(lr){3-8}
        $\gamma$ & $\delta$ & \multicolumn{2}{c}{signSGD} & \multicolumn{2}{c}{SBFL + Gaussian} & \multicolumn{2}{c}{SBFL + Laplacian}\\
        \cmidrule(lr){3-4} \cmidrule(lr){5-6} \cmidrule(lr){7-8}
        & & TL & TA & TL & TA & TL & TA\\
        \midrule
        $10^{-2}$ & 0.0 & - & - & 62 & 170 & 92 & - \\
        $10^{-3}$ & 0.0 & \textbf{210} & - & 258 & 599 & 470 & 582 \\
        $10^{-2}$ & 0.9 & - & - & \textbf{42} & - & 90 & - \\
        $10^{-3}$ & 0.9 & - & \textbf{683} & 58 & \textbf{139} & \textbf{89} & \textbf{400} \\
        $10^{-4}$ & 0.9 & 243 & - & 265 & 627 & 494 & 557 \\
        \bottomrule
        \end{tabular*}
        
        \smallskip
        \scriptsize
        \begin{tablenotes}
        \item[*] The average number of communication rounds to reach a certain level of training loss (TL) and test accuracy (TA). 
        \end{tablenotes}
        \end{threeparttable}\vspace{-0.4cm}
    \end{table}


                  \begin{figure*}[t]
    	\centering
        \includegraphics[width=0.9\linewidth]{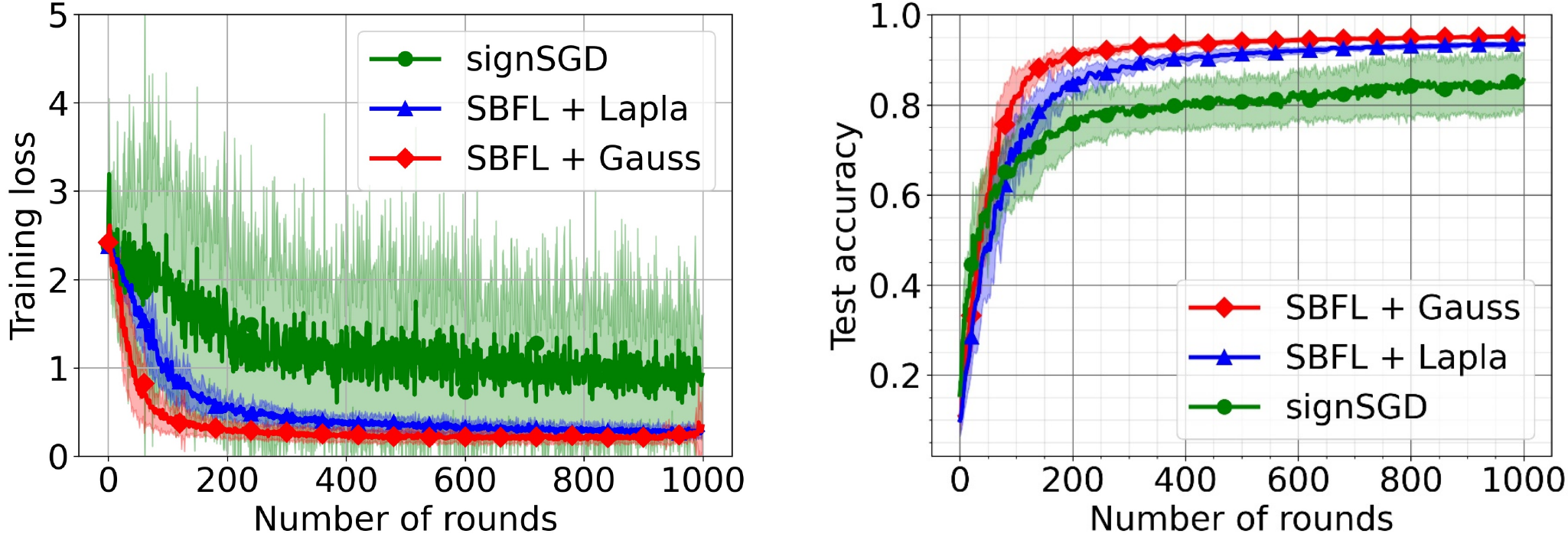}
      \caption{A comparison of test accuracy according to different learning rate $\gamma^t\in\{10^{-2},10^{-3}\}$.} \label{fig:noDC}\vspace{-0.3cm}
    \end{figure*}
    
    \subsubsection{MNIST dataset}
    Table \ref{Table:rounds} and \ref{Table:accuracy} summarize the simulation results varying hyper-parameters, including learning rates and weights for momentum. Since the magnitude of aggregated gradients varies according to the quantization and aggregation methods, the learning rates must be adjusted adequately for each method. In particular, in Table \ref{Table:rounds}, we evaluate the learning efficiencies for the different hyper-parameters in terms of the average number of communication rounds to reach a certain level of training loss or test accuracy. The levels of training loss (TA) and test accuracy (TA) are set to $1.0$ and $0.9$, respectively. Here, $-$ symbol denotes the method has not reached the level within $1000$ communication rounds. As can be seen,  when choosing hyper-parameter  $[\gamma,\delta]=[10^{-3},0.9]$, SBFL with Gaussian prior can reach the test accuracy of $0.9$ within 139 communication rounds in an average sense. In contrast, the sign SGD algorithm requires 638 communication rounds to meet the same level of test accuracy. Therefore, our learning algorithm approximately reduces the communication rounds by a factor of five. Meanwhile, SBFL with Laplacian prior takes more communication rounds than the Gaussian prior, which can be explained by the prior distribution model's mismatch effect with the true prior distribution. 
    
    Fig. \ref{fig:noDC} shows the training loss and the test accuracy of SBFL and signSGD algorithms with the best hyper-parameters for each algorithm in Table \ref{Table:rounds}. As can be seen, the proposed one provides significant gains compared to the signSGD algorithm in both the loss and accuracy. The proposed algorithm speeds up the convergence rates, saving communication costs for federated learning systems. Besides, SBFL with both prior distributions reduces the variance of training loss and test accuracy. This shows that SBFL can take advantage when aggregating the local gradients by jointly harnessing the link quality's heterogeneities and the prior distributions, even for optimizing a non-convex loss function.
   
   Fig \ref{fig:quantization} demonstrates how the test accuracy of SBFL changes according to quantization levels of $\nu_k^t$ for the Gaussian and $\lambda_k^t$ for the Laplacian priors. As can be seen, the test accuracy does not change even if  we quantize $\nu_k^t (\lambda_k^t)$ using a 4-bit uniform quantizer. This result verifies that SBFL can improve the learning performance with a very marginal additional communication cost for sending $\nu_k^t$ ($\lambda_k^t$).

                  \begin{figure*}[t]
    	\centering
        \includegraphics[width=0.9\linewidth]{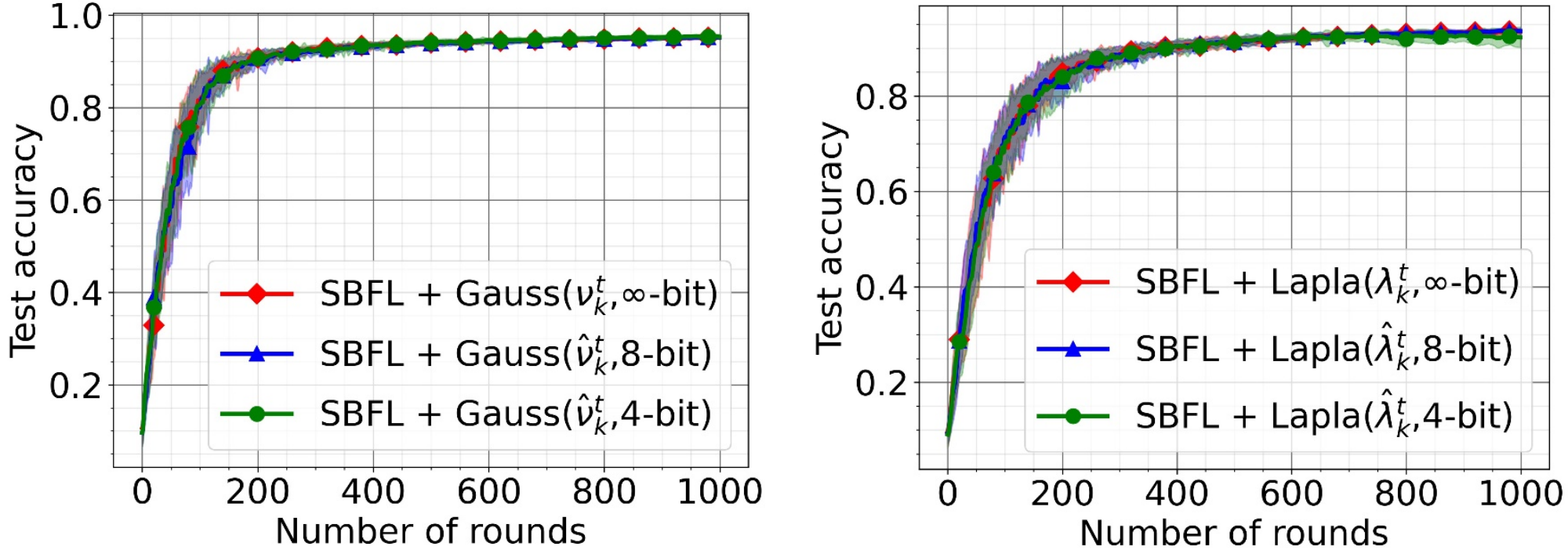}
      \caption{The effect of quantization levels of $\nu_k^t$ or ($\lambda_k^t$) according to SBFL with different priors.} \label{fig:quantization}\vspace{-0.5cm}
    \end{figure*}

    \begin{table}[t]
        \centering
        \begin{threeparttable}
        \caption{ Test accuracy comparison with signSGD for different hyper-parameters\label{Table:accuracy}}
        \begin{tabular*}{\columnwidth}{@{\extracolsep{\fill}}cccccccc}
        \toprule 
        \multicolumn{2}{c}{Hyperparam.} & \multicolumn{6}{c}{Test Accuracy\tnote{*}} \\
        \cmidrule(){1-2} \cmidrule(lr){3-8}
        $\gamma$ & $\delta$ & \multicolumn{2}{c}{signSGD} & \multicolumn{2}{c}{SBFL + Gaussian} & \multicolumn{2}{c}{SBFL + Laplacian}\\
        \cmidrule(lr){3-4} \cmidrule(lr){5-6} \cmidrule(lr){7-8}
        & & Full & DC & Full & DC & Full & DC\\
        \midrule
        $10^{-2}$ & 0.0 & 0.78 & 0.84 & \textbf{0.95} & 0.92 & \textbf{0.94} & 0.41 \\
        $10^{-3}$ & 0.0 & 0.86 & 0.77 & 0.89 & \textbf{0.94} & 0.81 & \textbf{0.94} \\
        $10^{-4}$ & 0.0 & 0.71 & 0.51 & 0.54 & 0.73 & 0.36 & 0.73 \\
        $10^{-2}$ & 0.9 & 0.11 & 0.35 & 0.94 & 0.10 & 0.71 & 0.10 \\
        $10^{-3}$ & 0.9 & \textbf{0.90} & \textbf{0.90} & \textbf{0.95} & 0.93 & \textbf{0.94} & 0.91 \\
        $10^{-4}$ & 0.9 & 0.84 & 0.78 & 0.89 & \textbf{0.94} & 0.82 & \textbf{0.94} \\
        \bottomrule
        \end{tabular*}
        
        \smallskip
        \scriptsize
        \end{threeparttable}
    \end{table}

                  \begin{figure*}[t]
    	\centering
        \includegraphics[width=0.9\linewidth]{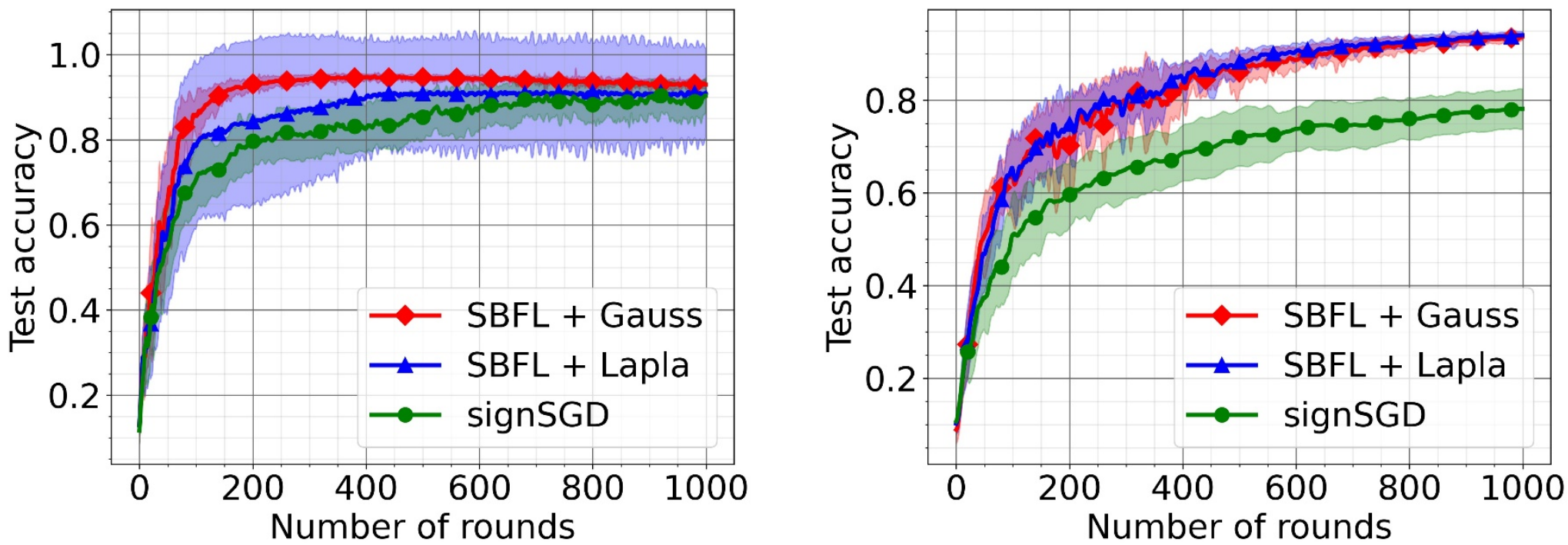}
      \caption{A comparison of test accuracy according to different learning rate $\gamma^t\in\{10^{-2},10^{-3}\}$ when the downlink compression is used.} \label{fig:DC}\vspace{-0.5cm}
    \end{figure*}
    

    In Table \ref{Table:accuracy}, we compare the test accuracies of the algorithms by altering learning hyper-parameters $\gamma\in \left\{10^{-2},10^{-3},10^{-4},10^{-5}\right\}$ and $\delta\in\{0,0.9\}$. The test accuracy values are obtained after $1000$ communication rounds. Compared to the no downlink compression case, in which the perfect model parameter ${\bf w}^t$ is delivered to mobile devices per communication round, we can observe that SBFL with Gaussian and Laplacian priors can enhance the test accuracy by $5\%$ and $4\%$ compared to the signSGD algorithm without downlink compression, respectively. To fairly compare the proposed algorithm with the benchmark signSGD, which only requires to send the binary sign vector in the downlink communication per round, element-wise sign quantization is applied on the downlink compression for each method. In this case, we observe that SBFL with downlink compression outperforms the corresponding signSGD algorithm in terms of test accuracy for various learning rates and weight parameters. As shown in Fig. \ref{fig:DC}, one interesting observation is that the test accuracy gain improves more than that of the case without the downlink compression. This elucidates that the proposed algorithm is robust to downlink compression; thereby, it is promising for wireless federated learning systems even when the downlink channel capacity is limited.

    

\section{Conclusion}
This work proposed a new Bayesian approach for federated learning over heterogeneous wireless networks, called BFL, and its communication-and-computation efficient variation, SBFL. We demonstrated that SBFL improves the training loss and test accuracy performance compared to signSGD with heterogeneous data and distinct wireless link qualities across mobile devices.  These performance gains are attainable by jointly harnessing the side-information on the users' local gradient priors and the channel distributions, facilitating aggregation of the local gradients more accurately per communication round. We showed  that for non-convex and smooth objectives, the models trained using SBFL with heterogeneous datasets converge to the optimal value. In simulations, we also demonstrated the ability of SBFL to learn complicated convolutional network models more accurately than signSGD when using non-synthetic datasets. 

Promising future research directions include investigating of the effect of over-the-air computation when sharing wireless links across mobile devices. Generalization to federated multi-task learning is also an interesting direction.  

\appendix
    \subsection{Proof for Proposition 1}\label{proof1}
To prove Proposition 1,  we need to compute the expectation of ${\bar g}_{k,m}^t$ conditioned on $ {y}_{k,m}^t$:
    \begin{align}
        \mathbb{E}\left[{ \bar g}_{k,m}^t\mid {y}_{k,m}^t\right]  =	\frac{\int_{-\infty}^{\infty}{\bar g}^t_{k,m}   P\left({y}_{k,m}^t|{\bar g}^t_{k,m}\right)P\left( {\bar g}^t_{k,m}\right) {\rm d}{\bar g}^t_{k,m}}{ \int_{-\infty}^{\infty}   P\left({y}_{k,m}^t|{\bar g}^t_{k,m}\right)P\left( {\bar g}^t_{k,m}\right) {\rm d}{\bar g}^t_{k,m}  }.
        \label{eq:scalemse} 
    \end{align}      	
    The numerator term in \eqref{eq:scalemse} is computed as
    \begin{align}
        \!\! \int_{-\infty}^{\infty} \!\frac{{\bar g}^t_{k,m}}{ 2\pi \sigma_k \nu_k^t}  \exp\left( -\frac{\left|y_{k,m}^t\!-\!h_k^t\!{\sf sign}\left( {\bar g}_{k,m}^t \right) \right|^2}{2\sigma_k^2}\! -\!\frac{|{\bar g}_{k,m}^t|^2}{2(\nu_k^t)^2} \right)  {\rm d}{\bar g}^t_{k,m}\!\!=\nu_k^t\frac{e^{- \frac{\left(h_k\!-\!y_{k,m}^t\right)^2}{2\sigma_k^2}}- e^{- \frac{\left(h_k+y_{k,m}^t\right)^2}{2\sigma_k^2}} }{2\pi \sigma_k }. \label{eq:numerator}
    \end{align}
    Similarly,  the denominator term in \eqref{eq:scalemse} is computed as
    \begin{align}
        \!\! \int_{-\infty}^{\infty} \!\frac{1}{ 2\pi \sigma_k \nu_k^t}\!\! \exp\left( -\frac{\left|y_{k,m}^t\!-\!h_k^t{\sf sign}\left({\bar g}_{k,m}^t\right) \right|^2}{2\sigma_k^2}\! -\!\frac{|{\bar g}_{k,m}^t |^2}{2(\nu_k^t)^2} \right)  {\rm d}{\bar g}^t_{k,m}=\frac{e^{-\frac{\left(h_k-y_{k,m}^t\right)^2}{2\sigma_k^2}} +e^{-\frac{\left(h_k+y_{k,m}^t\right)^2}{2\sigma_k^2}}}{2\sqrt{2\pi }\sigma_k}.\label{eq:denominator}
    \end{align}
    Invoking \eqref{eq:numerator} and \eqref{eq:denominator} into \eqref{eq:scalemse}, we obtain
    \begin{align}
       \mathbb{E}\left[{ \bar g}_{k,m}^t\mid { y}_{k,m}^t\right] &=\nu_k^t\sqrt{\frac{2}{\pi}} \frac{ \exp\left(\frac{2h_k^t y^t_{k,m}}{\sigma_k^2}\right)-\exp\left(-\frac{2h_k^t y^t_{k,m}}{\sigma_k^2}\right)}{\exp\left(\frac{2h_k^t y^t_{k,m}}{\sigma_k^2}\right)+\exp\left(-\frac{2h_k^t y^t_{k,m}}{\sigma_k^2}\right)} =\nu_k^t\sqrt{\frac{2}{\pi}} \tanh\left(\frac{2h_k^t y^t_{k,m}}{\sigma_k^2}\right), 
    \end{align} 	
which completes the proof.

%

    \subsection{Proof for Theorem 1}\label{proof2}
    To compute MSE, we need to calculate the conditional variance as
    \begin{align}
    	\!\mathbb{E}\!\left[ \!\left({ \bar g}_{k,m}^t\! -\! \mathbb{E}\left[{\bar g}_{k,m}^t\mid {y}_{k,m}^t\!\right]\!\right)^2 \right] =\mathbb{E}\left[{\sf Var}\left( { \bar g}_{k,m}^t\mid {  y}_{k,m}^t\right)\right] ={\sf Var}\left({ g}_{k,m}^t\right) - {\sf Var}\left(\mathbb{E}\left[{  \bar g}_{k,m}^t\mid { y}_{k,m}^t\right]\right). \label{eq:convar_mse}
    \end{align} 
Since ${\sf Var}\left({ g}_{k,m}^t\right)=(\nu_k^t)^2$, we only need to compute ${\sf Var}\left(\mathbb{E}\left[{\bar g}_{k,m}^t\mid { y}_{k,m}^t\right]\right)=\mathbb{E}\left[(\mathbb{E}\left[{\bar g}_{k,m}^t\mid { y}_{k,m}^t\right])^2\right] - \left(\mathbb{E}\left[\mathbb{E}\left[{\bar g}_{k,m}^t\mid { y}_{k,m}^t\right]\right]\right)^2 $. Since $ \mathbb{E}\left[{ \bar g}_{k,m}^t\mid { y}_{k,m}^t\right] =\nu_k^t\sqrt{\frac{2}{\pi}} \tanh\left(\frac{2h_k^t y^t_{k,m}}{\sigma_k^2}\right)$, we have
    \begin{align}
        \mathbb{E}_{{ y}_{k,m}^t}\left[ \nu_k^t\sqrt{\frac{2}{\pi}}\tanh\left(\frac{2h_k^ty_k^t}{\sigma_k^2}\right)	\right] 
        =\nu_k^t\sqrt{\frac{2}{\pi}} \int_{-\infty}^{\infty}\tanh\left(\frac{2h_k^ty_k^t}{\sigma_k^2}\right) P({y}_{k,m}^t){\rm d}y_{k,m}^t =0,
    \end{align}
    where $P({y}_{k,m}^t)$ is defined in \eqref{eq:py}.
   In addition, we compute
    \begin{align}
        \mathbb{E}_{{y}_{k,m}^t}\left[  \left(\nu_k^t\sqrt{\frac{2}{\pi}} \tanh\left(\frac{2h_k^ty_k^t}{\sigma_k^2}\right)\right)^2	\right]= \left(\nu_k^t\right)^2\frac{2}{\pi} \int_{-\infty}^{\infty}  \tanh\left(\frac{2h_k^ty_k^t}{\sigma_k^2}\right)^2P({y}_{k,m}^t) {\rm d}y_{k,m}^t. 
        \label{eq:condvar}
    \end{align}
    Unfortunately, this integration does not have a closed-form expression. Therefore, by plugging \eqref{eq:condvar} into \eqref{eq:convar_mse}, we obtain the minimum MSE as
    \begin{align}
        \!\mathbb{E} \left[  \left({\bar g}_{k,m}^t - \mathbb{E}\left[{ \bar g}_{k,m}^t\mid { y}_{k,m}^t\!\right]\right)^2 \right]&=(\nu_k^t)^2-\left(\nu_k^t\right)^2\frac{2}{\pi} \int_{-\infty}^{\infty}  \tanh\left(\frac{2h_k^ty_k^t}{\sigma_k^2}\right)^2P({y}_{k,m}^t) {\rm d}y_{k,m}^t.  \label{eq:convar_mse2}
    \end{align}

   \subsection{Proof for Corollary 3} \label{proof3}
 
     When using $U^{\star}_{\sf BLMMSE}({\bf y}_1^t,\ldots, {\bf y}_K^t)$, the minimum MSE is given by         	\begin{align}
         		\eta_{\sf bmse}^t &= \sum_{k=1}^K\mathbb{E}\!\left[ \!\left({\bf \bar g}_{k}^t - U^{\star}_{\sf BLMMSE}({\bf y}_k^t)\right)^2 \right] \nonumber\\
    	&= M\sum_{k=1}^K{\sf Var}\left({\bar g}_{k,m}^t\right) - {\sf Var}\left( U^{\star}_{\sf BLMMSE}({y}_{k,m}^t)\right).
         	\end{align}
         Recall that when $\mu_k=0$ for $k\in [K]$, $  U^{\star}_{\sf BLMMSE}({ y}_{k,m}^t)=    \frac{\sqrt{\frac{2}{\pi}}h_k^t\nu_k^t}{ \frac{2}{\pi}(h_k^t)^2+\sigma_k^2}{y}_{k,m}^t	$. Therefore, 
         \begin{align}
         	{\sf Var}\left( U^{\star}_{\sf BLMMSE}({y}_{k,m}^t)\right)&= \frac{ \frac{2}{\pi} (h_k^t)^2(\nu_k^t)^2}{ \left\{\frac{2}{\pi} (h_k^t)^2+\sigma_k^2\right\}^2}\mathbb{E}\left[({y}_{k,m}^t)^2\right]-  \frac{\sqrt{\frac{2}{\pi}}h_k^t\nu_k^t}{ \frac{2}{\pi}(h_k^t)^2+\sigma_k^2}\mathbb{E}\left[ {y}_{k,m}^t	\right]^2\nonumber\\
         	&=\frac{ \frac{2}{\pi} (h_k^t)^2(\nu_k^t)^2}{ \left\{\frac{2}{\pi} (h_k^t)^2+\sigma_k^2\right\}^2}\left\{\frac{2}{\pi} (h_k^t)^2+\sigma_k^2\right\}.
         \end{align}
         Since ${\sf Var}\left({\bar g}_{k,m}^t\right) =(\nu_k^t)^2$, the resultant MSE is given by
         \begin{align}
         	\eta_{\sf bmse}^t =M\sum_{k=1}^K (\nu_k^t)^2\left[1- \frac{ \frac{2}{\pi} (h_k^t)^2}{  \frac{2}{\pi} (h_k^t)^2+\sigma_k^2}\right].
         \end{align}

    \subsection{Proof for Theorem 2}\label{proof4}

We prove Theorem 2  by relating the norm of the aggregated gradient to the expected improvement per communication round, comparing it with the total possible improvement, which is a widely-adopted strategy in the convergence anlaysis.

Let ${\bf \hat g}^t_{\Sigma}=U_{\sf MMSE}^{\star}({\bf y}_1^t,\ldots, {\bf y}_K^t)$ and ${\bf \hat g}^t_{\Sigma}={\bf g}^t_{\Sigma} +{\bf e}^t$ with $\mathbb{E}\left[\|{\bf e}^t\|_2^2\right] \leq \sigma^2_{\sf mse}$. From Assumption 2,  we compute the objective function improvement in a single algorithmic step as
    \begin{align}
    	 f\left({\bf w}^{t+1}\right) -  f\left({\bf w}^t\right) &\leq  \left({\bf g}_{\Sigma}^t\right)^{\top}\left( {\bf w}^{t+1}-{\bf w}^{t}\right) + \frac{L}{2}\left\|{\bf w}^{t+1}-{\bf w}^t\right\|_2^2 \nonumber\\
    	 &=-\gamma^t\left({\bf g}^t_{\Sigma}\right)^{\top}{\bf \hat g}^t_{\Sigma} +\left(\gamma^t\right)^2\frac{L}{2}\|{\bf \hat g}^t_{\Sigma}\|_2^2\nonumber\\
    	 &=-\gamma^t\left({\bf g}^t_{\Sigma}\right)^{\top}\left({\bf g}^t_{\Sigma} +{\bf e}^t\right)+\left(\gamma^t\right)^2\frac{L}{2}\|{\bf g}^t_{\Sigma}+{\bf e}^t\|_2^2.
    \end{align}
    Since $\mathbb{E}\left[\left({\bf g}^t_{\Sigma}\right)^{\top}{\bf e}^t\mid {\bf w}^t\right]=0$ and $\mathbb{E}\left[\|{\bf e}^t\|_2^2\right]\leq \sigma^2_{\sf mse}$, we can decompose the mean squared stochastic gradient into its mean and variance by taking the expectation conditioned on previous iterations as
     \begin{align}
    	\mathbb{E}\left[ f\left({\bf w}^{t+1}\right) -  f\left({\bf w}^t\right) \mid {\bf w}^t \right]\leq -\gamma^t\|{\bf  g}^t_{\Sigma}\|_2^2 +\left(\gamma^t\right)^2\frac{L}{2}\left(\|{\bf g}^t_{\Sigma}\|_2^2 +\sigma^2_{\sf mse}\right).\label{eq:condexp}
    \end{align}
   Invoking the adaptive learning rate $\gamma^t =\frac{1}{t+1} \leq \frac{1}{\sqrt{t+1}}$ and plugging it into \eqref{eq:condexp} we obtain
       \begin{align}
    	\mathbb{E}\left[ f\left({\bf w}^{t+1}\right) -  f\left({\bf w}^t\right) \mid {\bf w}^t \right]&\leq -\frac{\gamma}{\sqrt{t+1}}\|{\bf  g}^t_{\Sigma}\|_2^2 +\frac{\gamma^2}{t+1}\frac{L}{2}\left(\|{\bf g}^t_{\Sigma}\|_2^2 +\sigma^2_{\sf mse}\right) \nonumber \\
    	&\leq -\frac{\gamma}{\sqrt{t+1}}\|{\bf  g}^t_{\Sigma}\|_2^2\left(1-\frac{\gamma L}{2}\right) +\frac{\gamma^2}{t+1}\frac{L}{2} \sigma^2_{\sf mse}.
    \end{align}  
Taking the expectation over ${\bf w}^t$ and applying the method of telescoping sums over $t\in [T]$, we obtain
\begin{align}
	 f\left({\bf w}^{0}\right) -  f\left({\bf w}^{\star}\right) &\geq \mathbb{E}\left[\sum_{t=0}^{T-1}f\left({\bf w}^{t}\right)-f\left({\bf w}^{t+1}\right)\right]\nonumber\\
	 &\geq \sum_{t=0}^{T-1}\left[\frac{\gamma}{\sqrt{t+1}} \mathbb{E}\left[\|{\bf g}_{\Sigma}^t\|_2^2\right] \left(1-\frac{\gamma L}{2}\right)  -\frac{\gamma^2}{t+1}\frac{L}{2} \sigma^2_{\sf mse}\right]\nonumber\\
	 	 &\geq T\left[\frac{\gamma}{\sqrt{T}} \mathbb{E}\left[ \frac{1}{T}\sum_{t=0}^{T-1} \|{\bf g}_{\Sigma}^t\|_2^2\right] \left(1-\frac{\gamma L}{2}\right)\right]  -\sum_{t=0}^{T-1}\left[\frac{\gamma^2}{t+1}\frac{L}{2} \sigma^2_{\sf mse}\right]\nonumber\\
	 	 	 &\geq  \sqrt{T}\gamma \mathbb{E}\left[ \frac{1}{T}\sum_{t=0}^{T-1} \|{\bf g}_{\Sigma}^t\|_2^2\right] \left(1-\frac{\gamma L}{2}\right)  - \left(1+\ln T\right)\gamma^2 \frac{L}{2} \sigma^2_{\sf mse},
\end{align}   
where the last inequality is due to the harmonic sum, i.e., $\sum_{t=0}^{T-1}\frac{1}{1+t}\leq 1+\ln (T)$. This completes the proof.

    \bibliographystyle{IEEEtran}
    \bibliography{BayesianAggregation}

\end{document}